\newtheorem{proposition}{Proposition}
\newtheorem{theorem}{Theorem}
\newtheorem{lemma}{Lemma}
\newtheorem{assumption}{Assumption}
\newenvironment{sketchproof}%
  {\par\noindent{\itshape Sketch of proof.}\hspace*{0.5em}\ignorespaces}%
  {\qed\par}
\title{Error Feedback Approach for Quantization Noise Reduction of Distributed Graph Filters}
\name{Xue Xian Zheng, Tareq Al-Naffouri}
\address{King Abdullah University Of Science And Technology, Saudi Arabia}
\begin{document}
\ninept
\maketitle
\begin{abstract}
This work introduces an error feedback approach for reducing quantization noise of distributed graph filters. It comes from error spectrum shaping techniques from state-space digital filters, and therefore establishes connections between quantized filtering processes over different domains. Quantization noise expression incorporating error feedback for finite impulse response (FIR) and autoregressive moving average (ARMA) graph filters are both derived with regard to time-invariant and time-varying graph topologies. Theoretical analysis is provided, and closed-form error weight coefficients are found. Numerical experiments demonstrate the effectiveness of the proposed method in noise reduction for the graph filters regardless of the deterministic and random graph topologies. 
\end{abstract}
\begin{keywords}
Distributed graph filters, Quantization noise, Error feedback, Optimization
\end{keywords}

\section{Introduction}
\label{sec:intro}

Graph filters are the fundamental blocks for processing signals that reside on graphs \cite{1,2,3}. Recent studies have investigated the quantization problem for their distributed implementations, where nodes are assumed to have finite bits of communication capability. In this setting, the information loss caused by quantized message exchanges will inevitably degrade the filtering performance.

To reduce this effect, several techniques have been proposed \cite{4,5,6,7,8}. These methods either focus on improving filter robustness or optimizing quantization strategies. This paper provides a different perspective to deal with quantization effects. We introduce the error feedback as post-mitigation to the quantization errors. Our method extends error spectrum shaping (ESS) techniques, commonly used in digital filters \cite{9,10,11,12,13}, to distributed graph filters. In our approach, the error feedback matrix is constrained to be diagonal, equivalently meaning each node stores its quantization error locally, applies a weight factor, and participates in the fusion process as a compensation mechanism. We derive the quantization noise expressions incorporating error feedback for both finite impulse response (FIR) \cite{14} and autoregressive moving average (ARMA) \cite{15} graph filters over time-invariant and time-varying graph topologies. Additionally, we provide theoretical analyses and closed-form solutions for the error weight coefficients.

Our paper is organized as follows: In Section II, we briefly review the quantization problem for distributed graph filters. In Section III, we present the proposed error feedback method. Section IV provides experimental results to demonstrate the effectiveness. The conclusion is drawn in Section V.

\textit{Notations:} In this paper, for a matrix $\mathbf{X}$, $\mathbf{X}^{T}$ denotes its transpose, and $[\mathbf{X}]_{ij}$ represents the element in the $i$-th row and $j$-th column. The $i$-th row vector is denoted by $[\mathbf{X}]_{i:}$, while $[\mathbf{X}]_{:j}$ represents the $j$-th column vector. The trace of the matrix $\mathbf{X}$ is denoted by $\text{tr}(\mathbf{X})$, and $\Vert \mathbf{X} \Vert_{2}$ indicates its spectral norm. We use $\circ$ for the Hadamard product, and $\prod^{\leftarrow}$ and $\prod^{\rightarrow}$ for left and right continued matrix multiplication, respectively.

\section{Problem Formulation}
\label{sec:format}

Consider a signal $\mathbf{x} = [x_1, x_2, \dots, x_N]^{T} \in \mathbb{R}^{N}$ that resides on a graph $\mathcal{G} = (\mathcal{V}, \mathcal{E})$, where $\mathcal{V} = \{1, \dots , N\}$ represents the set of nodes, and $\mathcal{E} \subseteq \mathcal{V} \times \mathcal{V}$ represents the set of edges, with $|\mathcal{E}| = M$. The signal $\mathbf{x}$ is defined on the nodes of the graph, i.e., $\mathbf{x}: \mathcal{V} \to \mathbb{R}^{N}$. We denote the edge $(j, i) \in \mathcal{E}$ as a connection from node $j$ to node $i$. The set of all nodes connected to node $i$ is denoted by $\mathcal{N}_i = \{j \mid (j, i) \in \mathcal{E}\}$. The filtering process on $\mathbf{x}$ can be taken into the following scenarios:
\vspace{-2mm}
\subsection{Deterministic graph filters}

Let the node connections be stable, resulting in a deterministic graph filtering process. The graph shift operator $\mathbf{S} \in \mathbb{R}^{N \times N}$ encodes graph structure, represented by the adjacency matrix $\mathbf{A}$, the Laplacian $\mathbf{L} = \text{diag}(\mathbf{A}\mathbf{1}) - \mathbf{A}$ or any of their normalized or translated variants. For simplicity, we assume $\mathbf{S} = \mathbf{S}^{T}$ (i.e., $\mathbf{S}$ is symmetric) with a bounded spectral radius $\rho$. This allows us to express $\mathbf{S} = \mathbf{U} \Lambda \mathbf{U}^{T}$, where $\Lambda = \text{diag}\{\lambda_{1}, \dots, \lambda_{N}\}$ is a diagonal matrix of real eigenvalues. The following filters are then applied:

\noindent \textbf{Deterministic FIR:} $\mathbf{y} = \mathbf{H} (\mathbf{S}) \mathbf{x} = \sum_{k=0}^K \phi_k \mathbf{S}^k \mathbf{x}$ with an equivalent distributed state-space realization as
 \begin{equation}
\begin{aligned}
    \mathbf{w}_{k} = \mathbf{S}\mathbf{w}_{k-1} ~ \text{for } \mathbf{w}_{0}=\mathbf{x}, \mathbf{y} =   \sum_{k=0}^{K}\phi_{k}\mathbf{w}_{k}
\end{aligned}
\label{deqn_ex1}
\end{equation}

\noindent \textbf{Deterministic ARMA:} Under the conditions $|\psi_k \rho| < 1$, $\mathbf{y}_{t}$ inherently follows a distributed state-space realization
  \begin{equation}
\begin{aligned}
\mathbf{w}^{(k)}_{t} &= \psi_k \mathbf{S}\mathbf{w}^{(k)}_{t-1} + \varphi_k\mathbf{x} \\
\mathbf{y}_{t} &= \sum_{k=1}^{K}\mathbf{w}^{(k)}_{t} \quad \text{for } t \geq 1
\end{aligned}
\label{deqn_ex2}
\end{equation}
\vspace{-4mm}
\subsection{Stochastic graph filters}

When node connections may be subject to random failures, it leads to a stochastic graph filtering process. More concretely,

\vspace{1em}
\noindent \textbf{Definition 1} (RES model \cite{16})\textbf{. }\textit{Let $\mathcal{G} = (\mathcal{V}, \mathcal{E})$ be the underlying graph with edge $(j,i) \in \mathcal{E}$ that may fail. Its realization $\mathcal{G}_t = (\mathcal{V}, \mathcal{E}_t)$ is formed with edge sampling $\text{Pr}[(i,j)\in \mathcal{E}_{t}] = p, 0 \leq p < 1 $ for all $(i,j)\in \mathcal{E}$ independently.}
\vspace{1em}

\noindent \textit{Remark:} Under \textbf{Definition 1}, the graph shift operator of RES model is random, generated using a random Bernoulli matrix $\mathbf{B}_{t}$ as a mask \cite{17}: $\mathbf{S}_{t} = \mathbf{A}_{t} = \mathbf{B}_{t} \circ \mathbf{A}$ or $\mathbf{S}_{t} = \mathbf{L}_{t} = \text{diag}(\mathbf{A}_{t}\mathbf{1}) - \mathbf{A}_{t}$, with $\overline{\mathbf{S}} = \mathbb{E}[\mathbf{S}_{t}] = p \mathbf{A}$ or $\overline{\mathbf{S}} = \mathbb{E}[\mathbf{S}_{t}] = \mathbb{E}[\text{diag}(\mathbf{A}_{t}\mathbf{1})]- p \mathbf{A}$. Beyond these expressions, $\Vert \mathbf{S}_{t} \Vert_{2} \leq \rho $ holds, we can then apply filters as

\noindent \textbf{Stochastic FIR:} $\mathbf{y}= \mathbf{H} (\mathcal{\Phi}_{0:K-1} )\mathbf{x} =\sum_{k=0}^{K} \phi_k \mathcal{\Phi}_{0:k-1} \mathbf{x}$ with $\mathcal{\Phi}_{t:t'} = \prod_{\tau = t}^{t'\leftarrow}\mathbf{S}_{\tau}$ if $t'\geq t$ and $\mathbf{I}$ if $t'<t$. Then 
 \begin{equation}
\begin{aligned}
    \mathbf{w}_{k} = \mathbf{S}_{k-1}\mathbf{w}_{k-1} ~ \text{for } \mathbf{w}_{0}=\mathbf{x}, \mathbf{y} =   \sum_{k=0}^{K}\phi_{k}\mathbf{w}_{k}
\end{aligned}
\label{deqn_ex3}
\end{equation}
\noindent \textbf{Stochastic ARMA:} With $\Vert \psi_k \mathbf{S}_{t} \Vert_{2} \leq |\psi_k\rho|< 1$,
  \begin{equation}
\begin{aligned}
\mathbf{w}^{(k)}_{t} &= \psi_k \mathbf{S}_{t-1}\mathbf{w}^{(k)}_{t-1} + \varphi_k\mathbf{x} \\
\mathbf{y}_{t} &= \sum_{k=1}^{K}\mathbf{w}^{(k)}_{t} \quad \text{for } t \geq 1
\end{aligned}
\label{deqn_ex4}
\end{equation}
\vspace{-5mm}
\subsection{Quantization model}

The realizations in Eqs.~(\ref{deqn_ex1}--\ref{deqn_ex4}) primarily differ from standard state-space models in their state transition matrix, $\mathbf{S}$ or $\mathbf{S}_t$, which is structured by the underlying graph topology. This matrix inherently encapsulates the way of information exchange between state variables. The quantization problem arises due to the finite-bit constraint on information exchange. The formalization of the problem is below:
\begin{assumption}
    Let the state variables be $\mathbf{w}$. The fusion process $\mathbf{S}\mathbf{w}$ or $\mathbf{S}_{t}\mathbf{w}$ occurs after $\mathcal{Q}[\mathbf{w}] = \mathbf{w} + \mathbf{n}$, where $\mathcal{Q}[\cdot]$ is a uniform quantizer over $[-r, r]$ with $b$-bit signed representation, yielding $\mathbb{E}[\mathbf{n}] = \mathbf{0}$ and $\Sigma_{\mathbf{n}} = \sigma^2 \mathbf{I} = r / (3 \cdot 2^{b})$.
\end{assumption}
\noindent \textit{Remark:} Under \textbf{Assumption 1}, we obtain the following noise propagation models for Eqs.~(\ref{deqn_ex1}--\ref{deqn_ex4}):
\setlength{\abovedisplayskip}{3pt}
\setlength{\belowdisplayskip}{3pt}
\begin{subequations}
\begin{align}
    \widetilde{\mathbf{w}}_{k} &= \mathbf{S}(\widetilde{\mathbf{w}}_{k-1}+\mathbf{n}_{k-1}), \widetilde{\mathbf{y}} = \sum_{k=0}^{K}\phi_{k}\widetilde{\mathbf{w}}_{k} \tag{5a} \label{deqn_ex5} \\
    \widetilde{\mathbf{w}}^{(k)}_{t} &= \psi_k \mathbf{S}(\widetilde{\mathbf{w}}^{(k)}_{t-1}+\mathbf{n}^{(k)}_{t-1}), \widetilde{\mathbf{y}}_{t} = \sum_{k=1}^{K}\widetilde{\mathbf{w}}^{(k)}_{t} \tag{5b} \label{deqn_ex6} \\
    \widetilde{\mathbf{w}}_{k} &= \mathbf{S}_{k-1}(\widetilde{\mathbf{w}}_{k-1}+\mathbf{n}_{k-1}),  \widetilde{\mathbf{y}} = \sum_{k=0}^{K}\phi_{k}\widetilde{\mathbf{w}}_{k} \tag{5c} \label{deqn_ex7} \\
    \widetilde{\mathbf{w}}^{(k)}_{t} &= \psi_k \mathbf{S}_{t-1}(\widetilde{\mathbf{w}}^{(k)}_{t-1}+\mathbf{n}^{(k)}_{t-1}), \widetilde{\mathbf{y}}_{t} = \sum_{k=1}^{K}\widetilde{\mathbf{w}}^{(k)}_{t} \tag{5d} \label{deqn_ex8}
\end{align}
\end{subequations} 
where $\widetilde{\mathbf{w}}_{0}=\widetilde{\mathbf{w}}^{(k)}_{0}=\mathbf{0}$, $\mathbf{n}_{k-1} \sim (\mathbf{0}, \sigma^{2}_{k-1}\mathbf{I})$ i.i.d. uniform distribution for $k$, $\mathbf{n}^{(k)}_{t-1} \sim (\mathbf{0}, (\sigma^{(k)}_{t-1})^{2}\mathbf{I})$ i.i.d. uniform distribution for $k$ and $t$, respectively. We simplify $\sigma^{(k)}_{t-1} \to \sigma^{(k)}$ for the subsequent section discussions. 

\begin{figure}
  \centering
  % Requires \usepackage{graphicx}
    \setlength{\abovecaptionskip}{0pt}  % Adjust space above caption
  \setlength{\belowcaptionskip}{0pt}  % Adjust space below caption
  \includegraphics[width=0.48\textwidth]{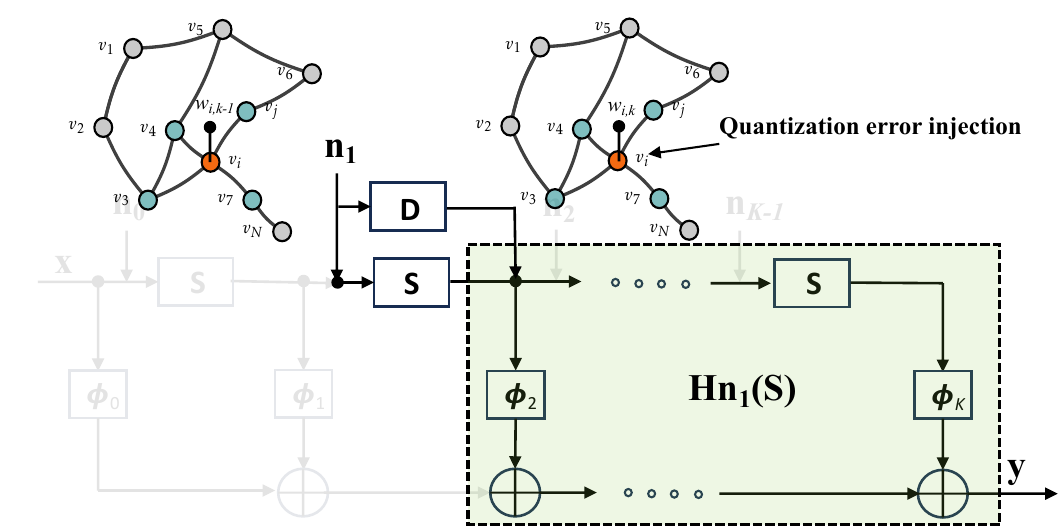}\\
  \caption{An illustration of quantization error feedback on deterministic FIR graph filter. Let $\mathbf{D}$ be diagonal, it is interpreted that each node locally stores its quantization error, applies a weight factor, and then integrates it into the fusion process.}
  \label{fig:diagram}
\end{figure}

\section{Quantization Error Feedback}
In this section, we present the method of quantization error feedback to mitigate noise propagation in Eqs.~(\ref{deqn_ex5}--\ref{deqn_ex8}). Specifically, we derive the feedbacked quantization noise expressions and closed-form feedback weight coefficients. The main idea of our method is illustrated in Fig.~\ref{fig:diagram}. Instead of direct injection, each quantization error is additionally weighted by a diagonal matrix $\mathbf{D}$ and integrated into subsequent processing as a compensation mechanism. 
\label{sec:pagestyle}
\vspace{-4mm}
\subsection{Error feedback on deterministic FIR graph filter}

\label{subsec:FIR}
In this case, we evaluate the power of the output noise from each source \(\mathbf{n}_{k-1}\). Given that these noise sources are independent, the total noise power can be expressed as the sum of their individual contributions. Our findings are summarized as:

\begin{proposition}
For feedbacked quantization noise output $\widetilde{\mathbf{y}}_{\mathbf{n}_{k-1}} = \mathbf{H}_{\mathbf{n}_{k-1}} (\mathbf{S})(\mathbf{S}-\mathbf{D})\mathbf{n}_{k-1}$ coming from $\mathbf{n}_{k-1}$ with sub graph filter $\mathbf{H}_{\mathbf{n}_{k-1}} (\mathbf{S}) =  \sum_{\kappa =k}^{K}\phi_{\kappa}\mathbf{S}^{\kappa - k}$, its average noise power per node $ \zeta_{k-1} =\frac{1}{N} \textnormal{tr}\Big( \mathbb{E}[\widetilde{\mathbf{y}}_{\mathbf{n}_{k-1}} \widetilde{\mathbf{y}}^{T}_{\mathbf{n}_{k-1}}]\Big) $ satisfies
\begin{equation}
\begin{aligned}
    \zeta_{k-1} & = \frac{\sigma_{k-1}^2}{N} \bigg\{\underbrace{\textnormal{tr}\Big(\mathbf{G}_{\mathbf{n}_{k-1}}(\mathbf{S}) \mathbf{S}^{2}\Big)}_{\text{Original noise gain}} \\
    & \underbrace{+ \textnormal{tr}\Big(\mathbf{G}_{\mathbf{n}_{k-1}}(\mathbf{S}) \mathbf{D}^{2}\Big)-2\textnormal{tr}\Big(\mathbf{G}_{\mathbf{n}_{k-1}}(\mathbf{S}) \mathbf{S}\mathbf{D}\Big)}_{\text{Mitigation:} I_{k-1}<0} \bigg\}
\end{aligned}
\label{deqn_ex9}
\end{equation}
where $\mathbf{G}_{\mathbf{n}_{k-1}}(\mathbf{S})=\mathbf{H}^{T}_{\mathbf{n}_{k-1}} (\mathbf{S})\mathbf{H}_{\mathbf{n}_{k-1}} (\mathbf{S})$ is the Gram matrix.
\end{proposition}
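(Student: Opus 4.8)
The plan is to proceed in three stages: first write the error-feedback version of the recursion (\ref{deqn_ex5}), then isolate the output contribution of a single noise source by superposition to recover the claimed form of $\widetilde{\mathbf{y}}_{\mathbf{n}_{k-1}}$, and finally reduce the resulting second-moment trace to the stated three-term expression using symmetry and the fact that polynomials in $\mathbf{S}$ commute. First I would read off the feedbacked recursion from Fig.~\ref{fig:diagram}: each node stores its local quantization error $\mathbf{n}_{k-1}$, scales it by the diagonal matrix $\mathbf{D}$, and subtracts it at the following fusion, so that $\widetilde{\mathbf{w}}_{k} = \mathbf{S}(\widetilde{\mathbf{w}}_{k-1}+\mathbf{n}_{k-1}) - \mathbf{D}\mathbf{n}_{k-1}$, whose coefficient of $\mathbf{n}_{k-1}$ is exactly $\mathbf{S}-\mathbf{D}$.

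Since the sources $\{\mathbf{n}_{k-1}\}$ are zero-mean and independent, the output noise decomposes additively across sources, so it suffices to track a single $\mathbf{n}_{k-1}$. The key observation is that the feedback term acts only once, at the step immediately after injection, producing the combined contribution $(\mathbf{S}-\mathbf{D})\mathbf{n}_{k-1}$ to $\widetilde{\mathbf{w}}_k$; for every later index $\kappa>k$ the feedback instead compensates the fresh error $\mathbf{n}_\kappa$, so the $\mathbf{n}_{k-1}$ contribution propagates through the shift alone, giving $\mathbf{S}^{\kappa-k}(\mathbf{S}-\mathbf{D})\mathbf{n}_{k-1}$ to $\widetilde{\mathbf{w}}_\kappa$. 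Summing against the output weights $\phi_\kappa$ and factoring out $(\mathbf{S}-\mathbf{D})\mathbf{n}_{k-1}$ then yields $\widetilde{\mathbf{y}}_{\mathbf{n}_{k-1}} = \mathbf{H}_{\mathbf{n}_{k-1}}(\mathbf{S})(\mathbf{S}-\mathbf{D})\mathbf{n}_{k-1}$ with $\mathbf{H}_{\mathbf{n}_{k-1}}(\mathbf{S}) = \sum_{\kappa=k}^K \phi_\kappa \mathbf{S}^{\kappa-k}$, as claimed.

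Next I would form the output covariance. Substituting this expression and using $\mathbb{E}[\mathbf{n}_{k-1}\mathbf{n}_{k-1}^T] = \sigma_{k-1}^2\mathbf{I}$ gives $\mathbb{E}[\widetilde{\mathbf{y}}_{\mathbf{n}_{k-1}}\widetilde{\mathbf{y}}_{\mathbf{n}_{k-1}}^T] = \sigma_{k-1}^2\,\mathbf{H}_{\mathbf{n}_{k-1}}(\mathbf{S})(\mathbf{S}-\mathbf{D})(\mathbf{S}-\mathbf{D})^T\mathbf{H}_{\mathbf{n}_{k-1}}^T(\mathbf{S})$. Taking the trace and applying its cyclic property moves the two copies of $\mathbf{H}_{\mathbf{n}_{k-1}}$ together into the Gram matrix $\mathbf{G}_{\mathbf{n}_{k-1}}(\mathbf{S})$, leaving $\zeta_{k-1} = \tfrac{\sigma_{k-1}^2}{N}\,\textnormal{tr}\big(\mathbf{G}_{\mathbf{n}_{k-1}}(\mathbf{S})(\mathbf{S}-\mathbf{D})(\mathbf{S}-\mathbf{D})^T\big)$. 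Because $\mathbf{S}=\mathbf{S}^T$ and $\mathbf{D}$ is diagonal, $(\mathbf{S}-\mathbf{D})^T=\mathbf{S}-\mathbf{D}$, so expanding the square produces the four terms $\mathbf{S}^2 + \mathbf{D}^2 - \mathbf{S}\mathbf{D} - \mathbf{D}\mathbf{S}$.

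The one step needing care is merging the two cross terms. Here I would use that $\mathbf{H}_{\mathbf{n}_{k-1}}(\mathbf{S})$, and hence $\mathbf{G}_{\mathbf{n}_{k-1}}(\mathbf{S})$, is a polynomial in $\mathbf{S}$ and therefore commutes with $\mathbf{S}$; combined with the cyclic property this gives $\textnormal{tr}(\mathbf{G}_{\mathbf{n}_{k-1}}\mathbf{D}\mathbf{S}) = \textnormal{tr}(\mathbf{S}\mathbf{G}_{\mathbf{n}_{k-1}}\mathbf{D}) = \textnormal{tr}(\mathbf{G}_{\mathbf{n}_{k-1}}\mathbf{S}\mathbf{D})$, so the two cross terms coincide and combine into $-2\,\textnormal{tr}(\mathbf{G}_{\mathbf{n}_{k-1}}\mathbf{S}\mathbf{D})$. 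Collecting the three surviving traces yields the stated formula, with $\textnormal{tr}(\mathbf{G}_{\mathbf{n}_{k-1}}\mathbf{S}^2)$ the original noise gain and the remaining two traces the mitigation term $I_{k-1}$. I expect this commutativity-and-symmetry bookkeeping, together with the conceptual point that the feedback acts exactly once per source, to be the only genuinely non-mechanical parts; everything else is substitution plus linearity of expectation and trace.
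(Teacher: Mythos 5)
Your proposal is correct and follows essentially the same route as the paper's sketch: substitute the covariance $\mathbb{E}[\mathbf{n}_{k-1}\mathbf{n}_{k-1}^{T}]=\sigma_{k-1}^{2}\mathbf{I}$, apply the cyclic property of the trace to form $\mathbf{G}_{\mathbf{n}_{k-1}}(\mathbf{S})$, and use $\mathbf{S}=\mathbf{S}^{T}$ (with $\mathbf{D}$ diagonal) to expand $(\mathbf{S}-\mathbf{D})(\mathbf{S}-\mathbf{D})^{T}$ into the three stated traces. You additionally derive the form of $\widetilde{\mathbf{y}}_{\mathbf{n}_{k-1}}$ from the feedback recursion and justify merging the two cross terms, both of which the paper leaves implicit but which are handled correctly here.
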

\begin{sketchproof}
  By $\mathbb{E}[\mathbf{n}_{k-1}\mathbf{n}^{T}_{k-1}]=\sigma_{k-1}^2\mathbf{I}$, $\mathbb{E}[\widetilde{\mathbf{y}}_{\mathbf{n}_{k-1}} \widetilde{\mathbf{y}}^{T}_{\mathbf{n}_{k-1}}]=\sigma_{k-1}^2(\mathbf{H}_{\mathbf{n}_{k-1}}(\mathbf{S})(\mathbf{S}-\mathbf{D})(\mathbf{S}-\mathbf{D})^{T}\mathbf{H}^{T}_{\mathbf{n}_{k-1}} (\mathbf{S}))$ . By trace cyclic property \(\text{tr}(\mathbf{AB}) = \textnormal{tr}(\mathbf{BA}),\mathbf{S}=\mathbf{S}^{T}\),  Eq.~(6) follows.
\end{sketchproof}
\begin{theorem}
Under \textbf{Proposition 1}, given $\mathbf{D}$ varies as $\mathbf{D}_{k-1}=\textnormal{diag}\{\alpha_{1, k-1}, \dots, \alpha_{N, k-1}\}$, $\Theta = \{\mathbf{D}_{0}\mathbf{1}, \dots, \mathbf{D}_{K-1}\mathbf{1}\} \in \mathbb{R}^{N \times K}$, the closed-form solution $\alpha_{i, k-1}$ and noise reduction $I(\Theta)$ are
\begin{equation}
\begin{aligned}
\alpha_{i,k-1} & = \frac{\Big [\mathbf{G}_{\mathbf{n}_{k-1}}(\mathbf{S})\mathbf{S}\Big ]_{ii}}{\Big [ \mathbf{G}_{\mathbf{n}_{k-1}}(\mathbf{S})\Big ]_{ii}}\\
I(\Theta) & =\sum_{k=1}^{K}\frac{\sigma_{k-1}^2}{N}\sum_{i=1}^{N}{\frac{\Big [\mathbf{G}_{\mathbf{n}_{k-1}}(\mathbf{S})\mathbf{S}\Big ]^{2}_{ii}}{\Big [ \mathbf{G}_{\mathbf{n}_{k-1}}(\mathbf{S})\Big ]_{ii}}}
\end{aligned}
\label{deqn_ex10}
\end{equation}
\end{theorem}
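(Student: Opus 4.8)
The plan is to minimize the per-source noise power $\zeta_{k-1}$ from Proposition~1 as an unconstrained quadratic program over the diagonal feedback matrix $\mathbf{D}_{k-1}$, exploiting the fact that its diagonal structure decouples the problem into $N$ independent scalar subproblems. First I would note that in the expression for $\zeta_{k-1}$ the original noise gain $\textnormal{tr}(\mathbf{G}_{\mathbf{n}_{k-1}}(\mathbf{S})\mathbf{S}^2)$ does not involve $\mathbf{D}_{k-1}$, so minimizing $\zeta_{k-1}$ reduces to minimizing only the mitigation term $I_{k-1} = \textnormal{tr}(\mathbf{G}\mathbf{D}_{k-1}^2) - 2\,\textnormal{tr}(\mathbf{G}\mathbf{S}\mathbf{D}_{k-1})$, where I abbreviate $\mathbf{G} := \mathbf{G}_{\mathbf{n}_{k-1}}(\mathbf{S})$. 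Since $\mathbf{D}_{k-1} = \textnormal{diag}\{\alpha_{1,k-1},\dots,\alpha_{N,k-1}\}$, I would expand each trace entrywise using $\textnormal{tr}(\mathbf{G}\mathbf{D}_{k-1}^2) = \sum_i [\mathbf{G}]_{ii}\alpha_{i,k-1}^2$ and $\textnormal{tr}(\mathbf{G}\mathbf{S}\mathbf{D}_{k-1}) = \sum_i [\mathbf{G}\mathbf{S}]_{ii}\alpha_{i,k-1}$, yielding $I_{k-1} = \sum_i\big([\mathbf{G}]_{ii}\alpha_{i,k-1}^2 - 2[\mathbf{G}\mathbf{S}]_{ii}\alpha_{i,k-1}\big)$, a sum of decoupled scalar quadratics.

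Next I would minimize each scalar quadratic $[\mathbf{G}]_{ii}\alpha^2 - 2[\mathbf{G}\mathbf{S}]_{ii}\alpha$ separately. Setting its derivative $2[\mathbf{G}]_{ii}\alpha - 2[\mathbf{G}\mathbf{S}]_{ii}$ to zero (equivalently, completing the square) gives the minimizer $\alpha_{i,k-1} = [\mathbf{G}\mathbf{S}]_{ii}/[\mathbf{G}]_{ii}$, which is exactly the claimed closed form, with attained minimum $-[\mathbf{G}\mathbf{S}]_{ii}^2/[\mathbf{G}]_{ii}$. Summing over $i$ gives the optimal per-stage mitigation $I_{k-1}^\star = -\sum_i [\mathbf{G}\mathbf{S}]_{ii}^2/[\mathbf{G}]_{ii} < 0$, which also re-confirms the sign asserted in Proposition~1. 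Finally, because the noise sources $\mathbf{n}_{k-1}$ are mutually independent, the total output noise power is $\sum_k \zeta_{k-1}$ with each $\zeta_{k-1}$ depending only on its own $\mathbf{D}_{k-1}$; the stagewise minimizations therefore decouple, and weighting each optimal reduction by $\sigma_{k-1}^2/N$ and reporting the magnitude of the total decrease yields $I(\Theta) = \sum_{k=1}^K \frac{\sigma_{k-1}^2}{N}\sum_{i=1}^N [\mathbf{G}_{\mathbf{n}_{k-1}}(\mathbf{S})\mathbf{S}]_{ii}^2/[\mathbf{G}_{\mathbf{n}_{k-1}}(\mathbf{S})]_{ii}$, as stated.

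The step I expect to require the most care is guaranteeing that each scalar subproblem has a well-defined unique minimizer, which hinges on $[\mathbf{G}]_{ii} > 0$ so that the division is valid and the quadratic is strictly convex. Since $\mathbf{G} = \mathbf{H}_{\mathbf{n}_{k-1}}^T(\mathbf{S})\mathbf{H}_{\mathbf{n}_{k-1}}(\mathbf{S})$ is a Gram matrix, $[\mathbf{G}]_{ii} = \Vert [\mathbf{H}_{\mathbf{n}_{k-1}}(\mathbf{S})]_{:i}\Vert_2^2 \ge 0$, and I would argue strict positivity whenever the $i$-th column of the sub-filter is nonzero; this holds generically and in particular for $k=K$, where $\mathbf{H}_{\mathbf{n}_{K-1}}(\mathbf{S}) = \phi_K\mathbf{I}$ with $\phi_K \neq 0$. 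The only remaining point to verify is the clean decoupling across stages: I would check from the feedback structure that the error injected at stage $k-1$ interacts solely with $\mathbf{D}_{k-1}$, so that the columns of $\Theta$ can be optimized independently, which is exactly what permits expressing $I(\Theta)$ as a double sum.
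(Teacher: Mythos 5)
Your proposal is correct and follows essentially the same route as the paper's Appendix~I proof: substitute the diagonal $\mathbf{D}_{k-1}$ into the mitigation term of Proposition~1, expand the traces into decoupled scalar quadratics in the $\alpha_{i,k-1}$, set the derivative to zero, and substitute back to obtain the reduction magnitude, with the cross-stage decoupling justifying the outer sum over $k$. Your added observations --- that strict convexity requires $[\mathbf{G}_{\mathbf{n}_{k-1}}(\mathbf{S})]_{ii}>0$ and that the reported $I(\Theta)$ is the magnitude of the (negative) attained minimum --- are consistent with, and slightly more careful than, the paper's argument.
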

\begin{proof}
See Appendix I.
\end{proof}

\noindent \textit{Remark:} \textbf{Theorem 1} can be extended to yield closed-form solutions in partially invariant cases as $\alpha_{i,k-1} \to \alpha_{k-1}$ with $\mathbf{D}_{k-1} = \alpha_{k-1}\mathbf{I}, \Theta \in \mathbb{R}^{K}$ or $\alpha_{i,k-1} \to \alpha_{i}$ with $\mathbf{D}_{k-1} = \mathbf{D}, \Theta \in \mathbb{R}^{N}$, thereby reducing the storage and computations.
\vspace{-3mm}
\subsection{Error feedback on deterministic ARMA graph filter}
\vspace{-1mm}
In this case, we evaluate the asymptotic noise power of each $k$ parallel $\widetilde{\mathbf{w}}^{(k)}_{t\to \infty}$, which is summarized as follows:
\begin{proposition}
    For feedbacked quantization noise model $\widetilde{\mathbf{w}}^{(k)}_{t} = \psi_k \mathbf{S}\widetilde{\mathbf{w}}^{(k)}_{t-1}+(\psi_k \mathbf{S}-\mathbf{D})\mathbf{n}^{(k)}_{t-1}$, the asymptotic average noise power per node $ \zeta^{(k)}_{t \to \infty} =\frac{1}{N} \textnormal{tr}\Big( \mathbb{E} [\widetilde{\mathbf{w}}^{(k)}_{t\to \infty} (\widetilde{\mathbf{w}}^{(k)}_{t\to \infty})^{T} ]\Big)$ satisfies
  \begin{equation}
\begin{aligned}
 \zeta^{(k)}_{t \to \infty} & =\frac{(\sigma^{(k)})^2}{N}\textnormal{tr} \Big( (\psi_{k}\mathbf{S}-\mathbf{D})^{T}\mathbf{W}^{(k)}_0(\psi_{k}\mathbf{S}-\mathbf{D}) \Big) =\frac{(\sigma^{(k)})^2}{N}\Big\{ \\ 
 &
\underbrace{\textnormal{tr}(\psi_{k}^{2}\mathbf{W}^{(k)}_0 \mathbf{S}^{2}) }_{\textit{Original noise gain}}\underbrace{+ \textnormal{tr}(\mathbf{W}^{(k)}_0\mathbf{D}^{2})-2\textnormal{tr}(\psi_{k}\mathbf{W}^{(k)}_0\mathbf{S}\mathbf{D})}_{\textit{Mitigation:} I^{(k)}<0} \Big\} 
\end{aligned}
\end{equation}
where $\mathbf{W}^{(k)}_0$ is the observability Gramian of $k$ branch dynamics, satisfying Lyapunov equation $\mathbf{W}^{(k)}_0 = \psi_{k}^{2}\mathbf{S}\mathbf{W}^{(k)}_0 \mathbf{S}+\mathbf{I}$
\end{proposition}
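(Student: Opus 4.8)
The plan is to treat the feedbacked noise recursion as a linear state-space system driven by white noise and compute its stationary state covariance. First I would abbreviate the state transition as $\mathbf{A} = \psi_k \mathbf{S}$ and the noise-injection matrix as $\mathbf{B} = \psi_k \mathbf{S} - \mathbf{D}$, so that the recursion reads $\widetilde{\mathbf{w}}^{(k)}_{t} = \mathbf{A}\widetilde{\mathbf{w}}^{(k)}_{t-1} + \mathbf{B}\mathbf{n}^{(k)}_{t-1}$ with zero initial state $\widetilde{\mathbf{w}}^{(k)}_{0}=\mathbf{0}$. Unrolling the recursion yields $\widetilde{\mathbf{w}}^{(k)}_{t} = \sum_{j=0}^{t-1}\mathbf{A}^{j}\mathbf{B}\,\mathbf{n}^{(k)}_{t-1-j}$, expressing the state as a weighted sum of the independent noise injections.

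Next, using that the noise is i.i.d. with $\mathbb{E}[\mathbf{n}^{(k)}(\mathbf{n}^{(k)})^{T}] = (\sigma^{(k)})^{2}\mathbf{I}$, all cross terms in the double sum vanish, so $\mathbb{E}[\widetilde{\mathbf{w}}^{(k)}_{t}(\widetilde{\mathbf{w}}^{(k)}_{t})^{T}] = (\sigma^{(k)})^{2}\sum_{j=0}^{t-1}\mathbf{A}^{j}\mathbf{B}\mathbf{B}^{T}(\mathbf{A}^{T})^{j}$. Taking the trace and applying the cyclic property gives $\textnormal{tr}(\,\cdot\,) = (\sigma^{(k)})^{2}\,\textnormal{tr}\big(\mathbf{B}^{T}\big[\sum_{j=0}^{t-1}(\mathbf{A}^{T})^{j}\mathbf{A}^{j}\big]\mathbf{B}\big)$. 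I would then let $t\to\infty$; because $\Vert \psi_k\mathbf{S}\Vert_{2} = |\psi_k|\rho < 1$, the series converges and I define the observability Gramian $\mathbf{W}^{(k)}_0 = \sum_{j=0}^{\infty}(\mathbf{A}^{T})^{j}\mathbf{A}^{j}$. Since $\mathbf{S}$ is symmetric, $\mathbf{A}^{T}=\mathbf{A}$, hence $\mathbf{W}^{(k)}_0 = \sum_{j\geq 0}\psi_k^{2j}\mathbf{S}^{2j}$; peeling off the $j=0$ identity term and factoring one $\psi_k\mathbf{S}$ from each side of the remaining geometric sum verifies the Lyapunov identity $\mathbf{W}^{(k)}_0 = \psi_k^{2}\mathbf{S}\mathbf{W}^{(k)}_0\mathbf{S}+\mathbf{I}$. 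Substituting back establishes the first claimed equality $\zeta^{(k)}_{t\to\infty} = \frac{(\sigma^{(k)})^{2}}{N}\textnormal{tr}\big((\psi_k\mathbf{S}-\mathbf{D})^{T}\mathbf{W}^{(k)}_0(\psi_k\mathbf{S}-\mathbf{D})\big)$.

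Finally I would expand the quadratic form, using that $\mathbf{S}$, $\mathbf{D}$ (diagonal), and $\mathbf{W}^{(k)}_0$ are all symmetric. The two diagonal contributions collapse, after cyclic shifts, to $\textnormal{tr}(\psi_k^{2}\mathbf{W}^{(k)}_0\mathbf{S}^{2})$ and $\textnormal{tr}(\mathbf{W}^{(k)}_0\mathbf{D}^{2})$, reproducing the \emph{Original noise gain} and part of the \emph{Mitigation} term. The main bookkeeping lies in the two cross terms $\textnormal{tr}(\psi_k\mathbf{S}\mathbf{W}^{(k)}_0\mathbf{D})$ and $\textnormal{tr}(\psi_k\mathbf{D}\mathbf{W}^{(k)}_0\mathbf{S})$: these do not obviously coincide because $\mathbf{S}$ and $\mathbf{D}$ need not commute, but since each equals the trace of the transpose of the other and all three matrices are symmetric, both reduce to $\textnormal{tr}(\psi_k\mathbf{W}^{(k)}_0\mathbf{S}\mathbf{D})$, so they sum to $-2\textnormal{tr}(\psi_k\mathbf{W}^{(k)}_0\mathbf{S}\mathbf{D})$, completing the stated expansion.

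I expect the only genuine subtlety to be this cross-term collapse — confirming that the two non-commuting products combine into a single $-2\textnormal{tr}(\cdot)$ term purely through transposition and the symmetry of $\mathbf{S}$, $\mathbf{D}$, and $\mathbf{W}^{(k)}_0$ — together with the justification that the spectral bound $|\psi_k|\rho<1$ guarantees convergence of the Gramian series; the remaining manipulations are a routine state-space noise computation.
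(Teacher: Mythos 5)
Your proposal is correct and follows essentially the same route as the paper's own sketch: unroll the recursion into a sum of independent noise injections, use whiteness of $\mathbf{n}^{(k)}_{\tau}$ and the cyclic trace property to collect the sum $\sum_{j}(\psi_k\mathbf{S})^{j}(\psi_k\mathbf{S})^{j}$ into the Gramian $\mathbf{W}^{(k)}_0$ as $t\to\infty$, then expand the quadratic form using symmetry of $\mathbf{S}$, $\mathbf{D}$, and $\mathbf{W}^{(k)}_0$. You additionally verify the Lyapunov identity and the cross-term collapse explicitly, which the paper only asserts; the only cosmetic slip is writing $\Vert\psi_k\mathbf{S}\Vert_2 = |\psi_k|\rho$ where the paper only guarantees $\Vert\psi_k\mathbf{S}\Vert_2 \leq |\psi_k\rho| < 1$, which is all the convergence argument needs.
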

\begin{sketchproof}
     $\widetilde{\mathbf{w}}^{(k)}_{t} = \sum_{\tau=0}^{t-1}(\psi_{k}\mathbf{S})^{t-\tau-1}(\psi_{k}\mathbf{S}-\mathbf{D})\mathbf{n}^{(k)}_{\tau}$. By $\mathbb{E}[\mathbf{n}^{(k)}_{\tau}(\mathbf{n}^{(k)}_{\tau})^{T}] = (\sigma^{(k)})^2 \mathbf{I}$, \(\text{tr}(\mathbf{AB}) = \textnormal{tr}(\mathbf{BA})\), $\mathbb{E}[\widetilde{\mathbf{w}}^{(k)}_{t}(\widetilde{\mathbf{w}}^{(k)}_{t})^{T}]=(\sigma^{(k)})^2\sum_{\tau=0}^{t-1}(\psi_{k}\mathbf{S}-\mathbf{D})^{T}(\psi_{k}\mathbf{S}^{T})^{t-\tau-1}(\psi_{k}\mathbf{S})^{t-\tau-1}(\psi_{k}\mathbf{S}-\mathbf{D})$. As $t\to\infty$, $\mathbf{W}^{(k)}_0=\sum_{t=0}^{\infty}(\psi_{k}\mathbf{S})^{t}(\psi_{k}\mathbf{S})^{t}$. Eq. (8) follows.
\end{sketchproof}
\begin{theorem}
    Under \textbf{Proposition 2}, given $\mathbf{D}$ varies as $\mathbf{D}^{(k)}=\textnormal{diag}\{\alpha_{1}^{(k)},\cdots, \alpha_{N}^{(k)} \}$ and $\Theta = \{\mathbf{D}^{(1)}\mathbf{1},\cdots, \mathbf{D}^{(K)}\mathbf{1}\}\in\mathbb{R}^{N\times K}$, the closed-form solution  $\alpha_{i}^{(k)}$ and noise reduction $I(\Theta)$ are
\begin{equation}
\begin{aligned}
\alpha_{i}^{(k)} = \frac{\psi_{k}[\mathbf{W}_0^{(k)}\mathbf{S}]_{ii}}{[\mathbf{W}_0^{(k)}]_{ii}},
I(\Theta) =\sum_{k=1}^{K}\frac{(\sigma^{(k)})^2}{N}\sum_{i=1}^{N}\frac{\psi_{k}^{2}[\mathbf{W}_0^{(k)}\mathbf{S}]_{ii}^{2}}{[\mathbf{W}_0^{(k)}]_{ii}}
\end{aligned}
\end{equation}
\end{theorem}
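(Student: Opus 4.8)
The plan is to mirror the argument used for the FIR case (Theorem 1): the only free parameter is the diagonal feedback matrix $\mathbf{D}^{(k)}=\textnormal{diag}\{\alpha_1^{(k)},\dots,\alpha_N^{(k)}\}$, so I would exploit that diagonal structure to reduce the two $\mathbf{D}$-dependent traces appearing in Proposition 2 to scalar sums over diagonal entries, thereby turning the minimization of $\zeta^{(k)}_{t\to\infty}$ into $N$ decoupled one-dimensional problems per branch. Concretely, since $(\mathbf{D}^{(k)})^2$ is diagonal with entries $(\alpha_i^{(k)})^2$, one has $\textnormal{tr}(\mathbf{W}_0^{(k)}(\mathbf{D}^{(k)})^2)=\sum_{i=1}^N[\mathbf{W}_0^{(k)}]_{ii}(\alpha_i^{(k)})^2$, and because $\mathbf{D}^{(k)}$ is diagonal only the $i$-th term survives in each diagonal entry of $\mathbf{W}_0^{(k)}\mathbf{S}\mathbf{D}^{(k)}$, giving $\textnormal{tr}(\psi_k\mathbf{W}_0^{(k)}\mathbf{S}\mathbf{D}^{(k)})=\psi_k\sum_{i=1}^N[\mathbf{W}_0^{(k)}\mathbf{S}]_{ii}\alpha_i^{(k)}$. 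Substituting these into the mitigation term yields the separable quadratic $I^{(k)}=\sum_{i=1}^N\big([\mathbf{W}_0^{(k)}]_{ii}(\alpha_i^{(k)})^2-2\psi_k[\mathbf{W}_0^{(k)}\mathbf{S}]_{ii}\alpha_i^{(k)}\big)$.

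Each summand is an independent scalar quadratic in $\alpha_i^{(k)}$, so I would minimize term by term. Setting the derivative to zero gives $2[\mathbf{W}_0^{(k)}]_{ii}\alpha_i^{(k)}-2\psi_k[\mathbf{W}_0^{(k)}\mathbf{S}]_{ii}=0$, i.e. the claimed $\alpha_i^{(k)}=\psi_k[\mathbf{W}_0^{(k)}\mathbf{S}]_{ii}/[\mathbf{W}_0^{(k)}]_{ii}$. To certify that this stationary point is a genuine minimum and that the formula is well defined, I would invoke positive definiteness of the observability Gramian: because $|\psi_k\rho|<1$, the Neumann series $\mathbf{W}_0^{(k)}=\sum_{t\ge0}(\psi_k^2\mathbf{S}^2)^t$ solving the Lyapunov equation $\mathbf{W}_0^{(k)}=\psi_k^2\mathbf{S}\mathbf{W}_0^{(k)}\mathbf{S}+\mathbf{I}$ converges and equals $\mathbf{I}$ plus a positive semidefinite remainder, so its diagonal is strictly positive, $[\mathbf{W}_0^{(k)}]_{ii}>0$, making the leading coefficient of each quadratic positive.

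Finally, back-substituting $\alpha_i^{(k)}$ collapses the per-branch mitigation to $I^{(k)}=-\sum_{i=1}^N\psi_k^2[\mathbf{W}_0^{(k)}\mathbf{S}]_{ii}^2/[\mathbf{W}_0^{(k)}]_{ii}$; since the $K$ parallel branches carry i.i.d. noise sources $\mathbf{n}^{(k)}$ and separate feedback matrices $\mathbf{D}^{(k)}$, their contributions to the total output noise add independently, so weighting by $(\sigma^{(k)})^2/N$ and summing over $k$ produces the stated $I(\Theta)$ as the magnitude of the aggregate negative mitigation. The computation is routine once the decoupling is established; the main step worth care — exactly as in the FIR argument — is recognizing that the diagonal constraint renders the objective separable, with $\mathbf{W}_0^{(k)}$ from the steady-state Lyapunov equation now playing the role that the Gram matrix $\mathbf{G}_{\mathbf{n}_{k-1}}(\mathbf{S})$ played in Theorem 1, and that its strictly positive diagonal guarantees a global minimum.
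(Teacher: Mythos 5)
Your proposal is correct and follows essentially the same route as the paper: the paper proves this theorem by the argument of Theorem~1 (Appendix~I), namely using the diagonal structure of $\mathbf{D}^{(k)}$ to write the two mitigation traces as decoupled scalar quadratics in the $\alpha_i^{(k)}$, setting the gradient to zero, and back-substituting, with $\mathbf{W}_0^{(k)}$ replacing the Gram matrix. Your added observation that the Lyapunov/Neumann-series representation gives $[\mathbf{W}_0^{(k)}]_{ii}>0$, so the stationary point is a well-defined global minimum, is a small but welcome strengthening that the paper leaves implicit.
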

\begin{proof}
    The argument follows similarly to that of \textbf{Theorem 1}.
\end{proof}
\vspace{-4mm}
\subsection{Error feedback on stochastic FIR graph filter}

In this case, we follow the analysis of Subsection \ref{subsec:FIR} as below:
\begin{proposition}
    For feedbacked quantization noise output $\widetilde{\mathbf{y}}_{\mathbf{n}_{k-1}} = \mathbf{H}_{\mathbf{n}_{k-1}} (\mathcal{\Phi}_{k:K-1})(\mathbf{S}_{k-1}-\mathbf{D})\mathbf{n}_{k-1}$ coming from $\mathbf{n}_{k-1}$ with stochastic sub graph filter  $
    \mathbf{H}_{\mathbf{n}_{k-1}} (\mathcal{\Phi}_{k:K-1}) =   \sum_{\kappa =k}^{K}\phi_{\kappa}\mathcal{\Phi}_{k:\kappa-1}$, its average noise power per node $ \zeta_{k-1} =\frac{1}{N} \textnormal{tr}\Big( \mathbb{E}[\widetilde{\mathbf{y}}_{\mathbf{n}_{k-1}} \widetilde{\mathbf{y}}^{T}_{\mathbf{n}_{k-1}}]\Big)$ satisfies
    \begin{equation}
\begin{aligned}
      & \zeta_{k-1} = \frac{\sigma_{k-1}^2}{N} \bigg\{\underbrace{\textnormal{tr}\Big(\mathbb{E}[\mathbf{G}_{\mathbf{n}_{k-1}}(\mathcal{\Phi}_{k:K-1}) \mathbf{S}_{k-1}^{2}]\Big)}_{\textit{Original noise gain}} + \\ & \underbrace{ \textnormal{tr}\Big(\mathbb{E}[\mathbf{G}_{\mathbf{n}_{k-1}}(\mathcal{\Phi}_{k:K-1})] \mathbf{D}^{2}\Big)-2\textnormal{tr}\Big(\mathbb{E}[\mathbf{G}_{\mathbf{n}_{k-1}}(\mathcal{\Phi}_{k:K-1})] \overline{\mathbf{S}}\mathbf{D}\Big) }_{\textit{Mitigation:} I_{k-1}<0}\bigg\}
\end{aligned}
\end{equation}
where $\mathbf{G}_{\mathbf{n}_{k-1}}(\mathcal{\Phi}_{k:K-1})=\mathbf{H}^{T}_{\mathbf{n}_{k-1}} (\mathcal{\Phi}_{k:K-1})\mathbf{H}_{\mathbf{n}_{k-1}} (\mathcal{\Phi}_{k:K-1})$.
\end{proposition}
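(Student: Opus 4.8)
The plan is to mirror the argument for \textbf{Proposition 1}, but now carrying a second expectation over the random graph realizations $\{\mathbf{S}_\tau\}$ alongside the one over the quantization noise $\mathbf{n}_{k-1}$, and to let the temporal independence of the RES model do the extra work. The overall skeleton (outer product $\to$ collapse the noise covariance $\to$ cyclic trace $\to$ expand the quadratic form) is identical to the deterministic case; only the handling of the expectation differs.

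First, I would form the outer product directly from the stated output $\widetilde{\mathbf{y}}_{\mathbf{n}_{k-1}} = \mathbf{H}_{\mathbf{n}_{k-1}}(\mathcal{\Phi}_{k:K-1})(\mathbf{S}_{k-1}-\mathbf{D})\mathbf{n}_{k-1}$, giving a quadratic form in $\mathbf{n}_{k-1}$ sandwiched between $\mathbf{H}_{\mathbf{n}_{k-1}}(\mathcal{\Phi}_{k:K-1})(\mathbf{S}_{k-1}-\mathbf{D})$ and its transpose. Because the injected noise is independent of the edge-sampling process and $\mathbb{E}[\mathbf{n}_{k-1}\mathbf{n}^{T}_{k-1}] = \sigma^{2}_{k-1}\mathbf{I}$, taking the expectation over $\mathbf{n}_{k-1}$ first (tower property) collapses the inner factor to $\sigma^{2}_{k-1}\mathbf{I}$, leaving $\mathbb{E}[\widetilde{\mathbf{y}}_{\mathbf{n}_{k-1}}\widetilde{\mathbf{y}}^{T}_{\mathbf{n}_{k-1}}] = \sigma^{2}_{k-1}\,\mathbb{E}[\mathbf{H}_{\mathbf{n}_{k-1}}(\mathcal{\Phi}_{k:K-1})(\mathbf{S}_{k-1}-\mathbf{D})(\mathbf{S}_{k-1}-\mathbf{D})^{T}\mathbf{H}^{T}_{\mathbf{n}_{k-1}}(\mathcal{\Phi}_{k:K-1})]$, where the surviving expectation is over the graph alone.

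Next, I would take the trace, push it inside the expectation (both linear), use the cyclic property to move the sub-filter factors to the back and form $\mathbf{G}_{\mathbf{n}_{k-1}}(\mathcal{\Phi}_{k:K-1}) = \mathbf{H}^{T}_{\mathbf{n}_{k-1}}(\mathcal{\Phi}_{k:K-1})\mathbf{H}_{\mathbf{n}_{k-1}}(\mathcal{\Phi}_{k:K-1})$, and expand $(\mathbf{S}_{k-1}-\mathbf{D})^{T}\mathbf{G}(\mathbf{S}_{k-1}-\mathbf{D})$ into four terms. Using the per-realization symmetry $\mathbf{S}_{k-1}=\mathbf{S}_{k-1}^{T}$ (inherited from the undirected graph), the diagonality of $\mathbf{D}$, and the symmetry of the Gram matrix, the two cross terms coincide under the trace, reducing the integrand to $\textnormal{tr}(\mathbf{G}\mathbf{S}_{k-1}^{2}) + \textnormal{tr}(\mathbf{G}\mathbf{D}^{2}) - 2\,\textnormal{tr}(\mathbf{G}\mathbf{S}_{k-1}\mathbf{D})$ inside the graph expectation.

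The crux — and the one place where the stochastic case genuinely departs from \textbf{Proposition 1} — is evaluating the expectation of the cross term. Here I would observe that $\mathbf{H}_{\mathbf{n}_{k-1}}(\mathcal{\Phi}_{k:K-1}) = \sum_{\kappa=k}^{K}\phi_{\kappa}\mathcal{\Phi}_{k:\kappa-1}$ involves only the shift realizations $\mathbf{S}_{k},\dots,\mathbf{S}_{K-1}$ (indices $\tau\geq k$), since $\mathcal{\Phi}_{k:\kappa-1}$ is the left product $\mathbf{S}_{\kappa-1}\cdots\mathbf{S}_{k}$. Hence $\mathbf{G}_{\mathbf{n}_{k-1}}(\mathcal{\Phi}_{k:K-1})$ is a function of $\{\mathbf{S}_\tau\}_{\tau\geq k}$ alone and, by the across-time independence of the RES realizations (\textbf{Definition 1}), is independent of $\mathbf{S}_{k-1}$. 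This lets me factor $\mathbb{E}[\mathbf{G}\mathbf{S}_{k-1}] = \mathbb{E}[\mathbf{G}]\,\mathbb{E}[\mathbf{S}_{k-1}] = \mathbb{E}[\mathbf{G}]\,\overline{\mathbf{S}}$, which is exactly what produces the clean $\overline{\mathbf{S}}$ in the mitigation term; the $\mathbf{D}^{2}$ term factors trivially since $\mathbf{D}$ is deterministic, giving $\textnormal{tr}(\mathbb{E}[\mathbf{G}]\mathbf{D}^{2})$, while the original-gain term is kept together as $\textnormal{tr}(\mathbb{E}[\mathbf{G}\mathbf{S}_{k-1}^{2}])$. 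Assembling the three expected traces and dividing by $N$ recovers the stated $\zeta_{k-1}$. The main obstacle is conceptual rather than computational: one must track the index bookkeeping carefully to confirm that the Gram matrix excludes $\mathbf{S}_{k-1}$, since only then can the temporal independence be invoked to collapse the cross term to $\overline{\mathbf{S}}$; without this the expression would not close in the intended form.
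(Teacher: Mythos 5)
Your proposal is correct and follows essentially the same route as the paper's sketch: collapse the noise covariance via the tower property, pull the trace inside the expectation, apply the cyclic property to form the Gram matrix, and expand the quadratic form, replacing $\mathbf{S}_{k-1}$ by $\overline{\mathbf{S}}$ in the cross term. Your explicit observation that $\mathbf{G}_{\mathbf{n}_{k-1}}(\mathcal{\Phi}_{k:K-1})$ involves only $\mathbf{S}_{k},\dots,\mathbf{S}_{K-1}$ and is therefore independent of $\mathbf{S}_{k-1}$ is exactly the justification the paper leaves implicit when it writes $\mathbb{E}[\mathbf{S}_{k-1}]=\overline{\mathbf{S}}$, so your write-up is a slightly more careful version of the same argument.
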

\begin{sketchproof}
    By the law of total expectation \(\mathbb{E}[\mathbf{X}] = \mathbb{E}[\mathbb{E}[\mathbf{X}|\mathbf{Y}]]\), $\mathbb{E}[\widetilde{\mathbf{y}}_{\mathbf{n}_{k-1}} \widetilde{\mathbf{y}}^{T}_{\mathbf{n}_{k-1}}] = \mathbb{E}[\mathbf{H}_{\mathbf{n}_{k-1}} (\mathcal{\Phi}_{k:K-1})(\mathbf{S}_{k-1}-\mathbf{D})\mathbb{E}[\mathbf{n}_{k-1}\mathbf{n}^{T}_{k-1}]\\(\mathbf{S}_{k-1}-\mathbf{D})^{T}\mathbf{H}^{T}_{\mathbf{n}_{k-1}} (\mathcal{\Phi}_{k:K-1})]$. As \(\mathbb{E}[\mathbf{n}_{k-1}\mathbf{n}^{T}_{k-1}]=\sigma_{k-1}^2\mathbf{I}\) and \(\mathbb{E}[\mathbf{S}_{k-1}]=\overline{\mathbf{S}}\), by using \(\textnormal{tr}(\mathbb{E}[\cdot]) = \mathbb{E}[\textnormal{tr}(\cdot)]\) and \(\text{tr}(\mathbf{AB}) = \textnormal{tr}(\mathbf{BA})\), Eq.~(10) follows.
\end{sketchproof}
\begin{theorem}
    Under \textbf{Proposition 3}, given $\mathbf{D}$ varies as $\mathbf{D}_{k-1}=\textnormal{diag}\{\alpha_{1, k-1},\cdots, \alpha_{N, k-1} \}$, $\Theta =\{\mathbf{D}_{0}\mathbf{1},\cdots,\mathbf{D}_{K-1}\mathbf{1}\}\in\mathbb{R}^{N\times K}$, the closed-form solution $\alpha_{i, k-1}$ and noise reduction $I(\Theta)$ are
    \begin{equation}
\begin{aligned}
\alpha_{i,k-1} & = \frac{\Big [\mathbb{E}[\mathbf{G}_{\mathbf{n}_{k-1}}(\mathcal{\Phi}_{k:K-1})] \overline{\mathbf{S}} \Big ]_{ii}}{\Big [\mathbb{E}[\mathbf{G}_{\mathbf{n}_{k-1}}(\mathcal{\Phi}_{k:K-1})] \Big ]_{ii}}\\
I(\Theta) & =\sum_{k=1}^{K}\frac{\sigma_{k-1}^2}{N}\sum_{i=1}^{N}{\frac{\Big [\mathbb{E}[\mathbf{G}_{\mathbf{n}_{k-1}}(\mathcal{\Phi}_{k:K-1})] \overline{\mathbf{S}}\Big ]^{2}_{ii}}{\Big [ \mathbb{E}[\mathbf{G}_{\mathbf{n}_{k-1}}(\mathcal{\Phi}_{k:K-1})]\Big ]_{ii}}}
\end{aligned}
\end{equation}
\end{theorem}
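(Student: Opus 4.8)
The plan is to reuse the optimization argument of Theorem~1, since Proposition~3 has already reduced the per-source quantization noise power to a trace functional of the \emph{deterministic} diagonal matrix $\mathbf{D}_{k-1}$. Write $\overline{\mathbf{G}}_{k-1} := \mathbb{E}[\mathbf{G}_{\mathbf{n}_{k-1}}(\mathcal{\Phi}_{k:K-1})]$ for the averaged Gram matrix. Because $\mathbf{G}_{\mathbf{n}_{k-1}}(\mathcal{\Phi}_{k:K-1}) = \mathbf{H}^{T}_{\mathbf{n}_{k-1}}\mathbf{H}_{\mathbf{n}_{k-1}}$ is positive semidefinite for every realization of the graph, its expectation $\overline{\mathbf{G}}_{k-1}$ is positive semidefinite, so the diagonal entries $[\overline{\mathbf{G}}_{k-1}]_{ii}$ are nonnegative; I will assume they are strictly positive (if $[\overline{\mathbf{G}}_{k-1}]_{ii}=0$ the $i$-th column of $\mathbf{H}_{\mathbf{n}_{k-1}}$ vanishes almost surely, $\alpha_{i,k-1}$ is immaterial, and one sets it to zero). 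The only point where graph randomness enters the \emph{variable} part is the cross term $\mathbb{E}[\mathbf{G}_{\mathbf{n}_{k-1}}(\mathcal{\Phi}_{k:K-1})\,\mathbf{S}_{k-1}]$, which factors as $\overline{\mathbf{G}}_{k-1}\,\overline{\mathbf{S}}$ since the stochastic sub-filter $\mathcal{\Phi}_{k:K-1}$ depends only on $\{\mathbf{S}_{k},\dots,\mathbf{S}_{K-1}\}$ and is therefore independent of $\mathbf{S}_{k-1}$ in the i.i.d.\ RES model; this is exactly the factorization already invoked in the sketch of Proposition~3, so I may start directly from the trace functional given there.

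Next I would isolate the mitigation term and write it entry-wise. Since $\mathbf{D}_{k-1}=\textnormal{diag}\{\alpha_{1,k-1},\dots,\alpha_{N,k-1}\}$ is diagonal, $\textnormal{tr}(\overline{\mathbf{G}}_{k-1}\mathbf{D}_{k-1}^{2}) = \sum_{i=1}^{N}[\overline{\mathbf{G}}_{k-1}]_{ii}\,\alpha_{i,k-1}^{2}$ and $\textnormal{tr}(\overline{\mathbf{G}}_{k-1}\overline{\mathbf{S}}\mathbf{D}_{k-1}) = \sum_{i=1}^{N}[\overline{\mathbf{G}}_{k-1}\overline{\mathbf{S}}]_{ii}\,\alpha_{i,k-1}$, so the mitigation contribution becomes
\[
I_{k-1} = \frac{\sigma_{k-1}^{2}}{N}\sum_{i=1}^{N}\Big([\overline{\mathbf{G}}_{k-1}]_{ii}\,\alpha_{i,k-1}^{2} - 2[\overline{\mathbf{G}}_{k-1}\overline{\mathbf{S}}]_{ii}\,\alpha_{i,k-1}\Big).
\]
This is a sum of $N$ decoupled scalar quadratics in the $\alpha_{i,k-1}$, each strictly convex by the positive leading coefficient $[\overline{\mathbf{G}}_{k-1}]_{ii}$. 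Setting $\partial I_{k-1}/\partial \alpha_{i,k-1}=0$ gives the unique global minimizer $\alpha_{i,k-1} = [\overline{\mathbf{G}}_{k-1}\overline{\mathbf{S}}]_{ii}/[\overline{\mathbf{G}}_{k-1}]_{ii}$, which is the stated closed-form coefficient.

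Finally I would substitute the minimizer back. Each scalar quadratic collapses to $-[\overline{\mathbf{G}}_{k-1}\overline{\mathbf{S}}]_{ii}^{2}/[\overline{\mathbf{G}}_{k-1}]_{ii}$, so the optimal per-source mitigation is $I_{k-1}^{\star} = -\frac{\sigma_{k-1}^{2}}{N}\sum_{i=1}^{N}[\overline{\mathbf{G}}_{k-1}\overline{\mathbf{S}}]_{ii}^{2}/[\overline{\mathbf{G}}_{k-1}]_{ii} < 0$, confirming the sign asserted in Proposition~3. Because the $K$ noise sources $\mathbf{n}_{k-1}$ are mutually independent, the total output noise power is the sum of the per-source powers, and the total noise reduction is $I(\Theta) = -\sum_{k=1}^{K} I_{k-1}^{\star}$, which is precisely the claimed expression.

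I do not expect a genuine obstacle: once Proposition~3 supplies the trace functional, the task is an entry-wise strictly convex minimization with a closed-form stationary point. The only substantive points worth stating carefully are the nondegeneracy assumption $[\overline{\mathbf{G}}_{k-1}]_{ii}>0$ and the independence-based factorization $\mathbb{E}[\mathbf{G}_{\mathbf{n}_{k-1}}(\mathcal{\Phi}_{k:K-1})\mathbf{S}_{k-1}] = \overline{\mathbf{G}}_{k-1}\overline{\mathbf{S}}$, as it is exactly this step that distinguishes the stochastic case from the deterministic Theorem~1 and explains why $\overline{\mathbf{S}}$, rather than a random $\mathbf{S}_{k-1}$, appears in the numerator of $\alpha_{i,k-1}$.
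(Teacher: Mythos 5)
Your proposal is correct and follows essentially the same route as the paper: the paper proves this theorem by the same argument as Theorem~1 (write the mitigation term as $N$ decoupled quadratics in the diagonal entries $\alpha_{i,k-1}$, set the gradient to zero, and substitute back), which is exactly what you do after replacing $\mathbf{G}_{\mathbf{n}_{k-1}}(\mathbf{S})$ and $\mathbf{S}$ by $\mathbb{E}[\mathbf{G}_{\mathbf{n}_{k-1}}(\mathcal{\Phi}_{k:K-1})]$ and $\overline{\mathbf{S}}$. Your added remarks on strict convexity via $[\overline{\mathbf{G}}_{k-1}]_{ii}>0$ and on the independence factorization that puts $\overline{\mathbf{S}}$ in the numerator are sound and merely make explicit what the paper leaves implicit.
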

\begin{proof}
    The argument follows similarly to that of \textbf{Theorem 1}.
\end{proof}
\noindent \textit{Remark:} The bottleneck here is deriving \(\mathbb{E}[\mathbf{G}_{\mathbf{n}_{k-1}}(\mathcal{\Phi}_{k:K-1})]\), which involves a set of product of structured random matrices $\mathcal{\Phi}_{k:K-1}$. Using the linearity of expectation $\mathbb{E}[\mathbf{X}+\mathbf{Y}] = \mathbb{E}[\mathbf{X}]+\mathbb{E}[\mathbf{Y}]$, $\mathbb{E}[\mathbf{G}_{\mathbf{n}_{k-1}}(\mathcal{\Phi}_{k:K-1})]$ is expressed as $\sum_{\kappa_1 =k}^{K}\sum_{\kappa_2 =k}^{K}\phi_{\kappa_1}\phi_{\kappa_2}\\\mathbb{E}[\mathcal{\Phi}_{k:\kappa_1-1}^{T}\mathcal{\Phi}_{k:\kappa_2-1}]$. Let $\kappa_1 < \kappa_2$ and $\mathcal{\Phi}^{T}_{t:t'} = \prod_{\tau = t}^{t'\rightarrow}\mathbf{S}_{\tau}$, we have 
\begin{equation}
\begin{aligned}
\mathbb{E}[\mathcal{\Phi}_{k:\kappa_1-1}^{T}\mathcal{\Phi}_{k:\kappa_2-1}]  = \mathbb{E}\Big[\overset{\xrightarrow{\hspace{1.2cm}}}{\mathbf{S}_{k} \cdot \cdot \mathbf{S}_{\kappa_1-1}}\overset{\xleftarrow{\hspace{2.7cm}}}{
\mathbf{S}_{\kappa_2-1}\cdot \cdot \mathbf{S}_{\kappa_1-1} \cdot \cdot \mathbf{S}_{k}}
\Big]
\end{aligned}
\end{equation}
which is further simplified to $\mathbb{E}[\mathbf{S}_{k} \cdot \cdot \mathbb{E}[\mathbf{S}_{\kappa_1-1}\overline{\mathbf{S}}^{\kappa_2-\kappa_1}\mathbf{S}_{\kappa_1-1}]\cdot \cdot\mathbf{S}_{k}]$ using \(\mathbb{E}[\mathbf{X}] = \mathbb{E}[\mathbb{E}[\mathbf{X}|\mathbf{Y}]]\). By observing that Eq.~(12) implies a recursive computational structure, we design an algorithm to address the bottleneck above, outlined as follows:

\noindent \textbf{Kernel method:} For any $\kappa_1,\kappa_2$, we let $\mathbf{M}_{0}=\overline{\mathbf{S}}^{|\kappa_2-\kappa_1|}$ and recursively calculate $\mathbf{M}_{l}=\mathbb{E}[\mathbf{S}_{t}\mathbf{M}_{l-1}\mathbf{S}_{t}]$ until $l= \textnormal{min} \{\kappa_1,\kappa_2\}-k$, to obtain $\mathbb{E}[\mathcal{\Phi}_{k:\kappa_1-1}^{T}\mathcal{\Phi}_{k:\kappa_2-1}]=\mathbf{M}_{l}$. Note that the recursive nature ensures every $\mathbf{S}_{t}$ involved in the calculation is independent with the previous results, so the subscripts used in Eq.~(12) to distinguish independence like $k,\kappa_1,\kappa_2$ are omitted. Furthermore, by defining the kernel tensor $\textit{\textbf{Ker}}(i,j)= \mathbb{E}\Big[[\mathbf{S}_{t}]_{:j} [\mathbf{S}_{t}]_{i:}\Big]$, we enable parallel computation of the expectation
\begin{equation}
\begin{aligned}
\Big[\mathbb{E}[\mathbf{S}_{t}\mathbf{M}_{l-1}\mathbf{S}_{t}]\Big]_{ij} = \text{tr}\Big(\mathbf{M}_{l-1}  \textit{\textbf{Ker}}(i,j)\Big)
\end{aligned}
\end{equation}
Here \(\textit{\textbf{Ker}}(i,j)\) captures the mutual uncertainty between nodes \(i\) and \(j\) about their connections to the rest of the network. Since this kernel tensor can be precomputed before any iterations, it allows us to accelerate the overall computation by reducing the complexity of the repeated matrix operations.

\vspace{-3mm}
\subsection{Error feedback on stochastic ARMA graph filter}
In this case, we have the following results: 
\begin{proposition}
    For feedbacked quantization noise model $\widetilde{\mathbf{w}}^{(k)}_{t} = \psi_k \mathbf{S}_{t-1}\widetilde{\mathbf{w}}^{(k)}_{t-1}+(\psi_k \mathbf{S}_{t-1}-\mathbf{D})\mathbf{n}^{(k)}_{t-1}$, the asymptotic average noise power per node $ \zeta^{(k)}_{t \to \infty} =\frac{1}{N} \textnormal{tr}\Big( \mathbb{E} [\widetilde{\mathbf{w}}^{(k)}_{t\to \infty} (\widetilde{\mathbf{w}}^{(k)}_{t\to \infty})^{T} ]\Big)$
    \begin{equation}
\begin{aligned}
& =\frac{(\sigma^{(k)})^2}{N}\textnormal{tr} \Big( \mathbb{E}[ (\psi_{k}\mathbf{S}_{t}-\mathbf{D})^{T}\mathbf{W}^{(k)}_{\mathcal{\Phi}}(\psi_{k}\mathbf{S}_{t}-\mathbf{D})]\Big)= \frac{(\sigma^{(k)})^2}{N}\Big\{ \\ & 
\underbrace{\textnormal{tr}(\psi_{k}^{2}\mathbb{E} [\mathbf{W}^{(k)}_{\mathcal{\Phi}} \mathbf{S}_{t}^{2}]) }_{\textit{Original noise gain}}\underbrace{+ \textnormal{tr}(\mathbf{W}^{(k)}_{\mathcal{\Phi}}\mathbf{D}^{2})-2\textnormal{tr}(\psi_{k}\mathbf{W}^{(k)}_{\mathcal{\Phi}}\overline{\mathbf{S}}\mathbf{D})}_{\textit{Mitigation:} I^{(k)}<0} \Big\} 
\end{aligned}
\end{equation}
where $\mathbf{W}^{(k)}_{\mathcal{\Phi}}=\sum_{t=0}^{\infty}\mathbb{E} [\psi_{k}^{2t}\mathcal{\Phi}_{0:t-1}^{T}\mathcal{\Phi}_{0:t-1}]$.
\end{proposition}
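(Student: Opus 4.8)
The plan is to reproduce the unrolling argument behind \textbf{Proposition 2}, while accommodating the time-varying random shifts $\mathbf{S}_{t-1}$ through conditioning and the i.i.d.\ stationarity of $\{\mathbf{S}_t\}$. First I would solve the recursion from $\widetilde{\mathbf{w}}^{(k)}_0=\mathbf{0}$, obtaining
\[
\widetilde{\mathbf{w}}^{(k)}_{t}=\sum_{\tau=0}^{t-1}\psi_k^{\,t-1-\tau}\,\mathcal{\Phi}_{\tau+1:t-1}\,(\psi_k\mathbf{S}_\tau-\mathbf{D})\,\mathbf{n}^{(k)}_\tau ,
\]
where $\mathcal{\Phi}_{\tau+1:t-1}=\prod_{\tau'=\tau+1}^{t-1\leftarrow}\mathbf{S}_{\tau'}$ gathers the shifts strictly posterior to step $\tau$. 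The structural point I would exploit is that, within each summand, the shift $\mathbf{S}_\tau$ entering the input coupling $(\psi_k\mathbf{S}_\tau-\mathbf{D})$ is independent of the forward product $\mathcal{\Phi}_{\tau+1:t-1}$, and both are independent of $\mathbf{n}^{(k)}_\tau$.

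Next I would form $\mathbb{E}[\widetilde{\mathbf{w}}^{(k)}_{t}(\widetilde{\mathbf{w}}^{(k)}_{t})^{T}]$. Because the $\mathbf{n}^{(k)}_\tau$ are zero-mean and independent across $\tau$ (and of all shifts), the cross terms drop and each surviving term carries $\mathbb{E}[\mathbf{n}^{(k)}_\tau(\mathbf{n}^{(k)}_\tau)^{T}]=(\sigma^{(k)})^2\mathbf{I}$. Taking the trace and applying $\text{tr}(\mathbf{AB})=\text{tr}(\mathbf{BA})$ would collapse each term to $(\sigma^{(k)})^2\psi_k^{2(t-1-\tau)}\text{tr}\big(\mathbb{E}[(\psi_k\mathbf{S}_\tau-\mathbf{D})^{T}\mathcal{\Phi}_{\tau+1:t-1}^{T}\mathcal{\Phi}_{\tau+1:t-1}(\psi_k\mathbf{S}_\tau-\mathbf{D})]\big)$. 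I would then condition on $\mathbf{S}_\tau$ via $\mathbb{E}[\mathbf{X}]=\mathbb{E}[\mathbb{E}[\mathbf{X}\mid\mathbf{Y}]]$: since $\mathcal{\Phi}_{\tau+1:t-1}$ is independent of $\mathbf{S}_\tau$, its Gram matrix is replaced by the deterministic $\mathbb{E}[\mathcal{\Phi}_{\tau+1:t-1}^{T}\mathcal{\Phi}_{\tau+1:t-1}]$, which by stationarity depends only on the length $m=t-1-\tau$ and equals $\mathbb{E}[\mathcal{\Phi}_{0:m-1}^{T}\mathcal{\Phi}_{0:m-1}]$. Re-indexing the sum by $m$ and letting $t\to\infty$ would assemble exactly
\[
\mathbf{W}^{(k)}_{\mathcal{\Phi}}=\sum_{m=0}^{\infty}\psi_k^{2m}\,\mathbb{E}[\mathcal{\Phi}_{0:m-1}^{T}\mathcal{\Phi}_{0:m-1}],
\]
leaving $\zeta^{(k)}_{t\to\infty}=\frac{(\sigma^{(k)})^2}{N}\text{tr}\big(\mathbb{E}[(\psi_k\mathbf{S}_t-\mathbf{D})^{T}\mathbf{W}^{(k)}_{\mathcal{\Phi}}(\psi_k\mathbf{S}_t-\mathbf{D})]\big)$, the outer expectation over a fresh copy $\mathbf{S}_t$ being irremovable because $\mathbf{S}_\tau$ enters quadratically. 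To justify the infinite series and the exchange of limit, sum and expectation, I would bound $\psi_k^{2m}\|\mathbb{E}[\mathcal{\Phi}_{0:m-1}^{T}\mathcal{\Phi}_{0:m-1}]\|_2\le|\psi_k\rho|^{2m}$ using $\|\mathbf{S}_t\|_2\le\rho$ and submultiplicativity, a summable geometric majorant since $|\psi_k\rho|<1$, and invoke dominated convergence.

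Finally, I would expand the quadratic form to recover the three displayed pieces. Using $\mathbf{S}_t=\mathbf{S}_t^{T}$, $\mathbf{D}=\mathbf{D}^{T}$, the symmetry of $\mathbf{W}^{(k)}_{\mathcal{\Phi}}$ (a sum of Gram matrices), $\mathbb{E}[\mathbf{S}_t]=\overline{\mathbf{S}}$, and trace cyclicity, the quadratic-in-$\mathbf{S}_t$ part yields $\text{tr}(\psi_k^{2}\mathbb{E}[\mathbf{W}^{(k)}_{\mathcal{\Phi}}\mathbf{S}_t^{2}])$, the $\mathbf{D}$-only part yields $\text{tr}(\mathbf{W}^{(k)}_{\mathcal{\Phi}}\mathbf{D}^{2})$, and the two cross terms, each linear in $\mathbf{S}_t$, merge through $\mathbb{E}[\mathbf{S}_t]=\overline{\mathbf{S}}$ and $\mathbf{W}^{(k)}_{\mathcal{\Phi}}=(\mathbf{W}^{(k)}_{\mathcal{\Phi}})^{T}$ into $-2\,\text{tr}(\psi_k\mathbf{W}^{(k)}_{\mathcal{\Phi}}\overline{\mathbf{S}}\mathbf{D})$. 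The main obstacle I anticipate is the dual role of $\mathbf{S}_\tau$: it appears in the input coupling yet must be separated from the random forward product without being extracted entirely from the expectation. The conditioning argument combined with i.i.d.\ stationarity is precisely what lets a single infinite-horizon Gramian $\mathbf{W}^{(k)}_{\mathcal{\Phi}}$ emerge, and confirming its convergence (hence the legitimacy of the $t\to\infty$ interchange) is the remaining point requiring care.
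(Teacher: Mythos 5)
Your proposal is correct and follows essentially the same route as the paper's own sketch: unroll the recursion, drop cross terms by the independence of the noise, use the tower property and trace cyclicity to separate $\mathbf{S}_{\tau}$ from the forward product $\mathcal{\Phi}_{\tau+1:t-1}$, invoke stationarity to re-index and assemble $\mathbf{W}^{(k)}_{\mathcal{\Phi}}$ in the limit, and expand the quadratic form using symmetry and $\mathbb{E}[\mathbf{S}_t]=\overline{\mathbf{S}}$. Your geometric-majorant justification of the convergence is exactly what the paper defers to its Lemma 1, so nothing is missing.
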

\begin{sketchproof}
    $\widetilde{\mathbf{w}}^{(k)}_{t} = \sum_{\tau=0}^{t-1}\psi_{k}^{t-\tau-1}\mathcal{\Phi}_{\tau+1:t-1}(\psi_{k}\mathbf{S}_{\tau}-\mathbf{D})\mathbf{n}^{(k)}_{\tau}$. By total expectation \(\mathbb{E}[\mathbf{X}] = \mathbb{E}[\mathbb{E}[\mathbf{X}|\mathbf{Y}]]\), \(\textnormal{tr}(\mathbb{E}[\cdot]) = \mathbb{E}[\textnormal{tr}(\cdot)]\) and \(\text{tr}(\mathbf{AB}) = \textnormal{tr}(\mathbf{BA})\), $\mathbb{E}[\widetilde{\mathbf{w}}^{(k)}_{t}(\widetilde{\mathbf{w}}^{(k)}_{t})^{T}]=(\sigma^{(k)})^2\sum_{\tau=0}^{t-1}\psi_{k}^{2(t-\tau-1)}\mathbb{E}[$ $(\psi_{k}\mathbf{S}_{\tau}-\mathbf{D})^{T}\mathcal{\Phi}^{T}_{\tau+1:t-1}\mathcal{\Phi}_{\tau+1:t-1}(\psi_{k}\mathbf{S}_{\tau}-\mathbf{D})]$. Since \(\mathcal{\Phi}_{\tau+1:t-1}\) and \(\mathbf{S}_{\tau}\) are independent for each term in the summation, the recursive relation in Eq.~(12) implies that \(\mathbb{E}[\mathcal{\Phi}^{T}_{\tau+1:t-1} \mathcal{\Phi}_{\tau+1:t-1}]\) can be computed first. As \(t \to \infty\), Eq.~(14) follows.
\end{sketchproof}
\begin{lemma}
    If any graph realization $\mathcal{G}_{t}(\mathcal{V},\mathcal{E}_{t})$ satisfies $\Vert \psi_k \mathbf{S}_{t} \Vert_{2} \leq |\psi_k\rho|< 1$, then infinite series $\mathbf{W}^{(k)}_{\mathcal{\Phi}}=\sum_{t=0}^{\infty}\mathbb{E} [\psi_{k}^{2t}\mathcal{\Phi}_{0:t-1}^{T}\mathcal{\Phi}_{0:t-1}]$ from stochastic system converges, and satisfies 
\begin{equation}
\begin{aligned}
\mathbf{W}^{(k)}_{\mathcal{\Phi}} &= \psi_{k}^{2}\mathbb{E}[\mathbf{S}_{t}\mathbf{W}^{(k)}_{\mathcal{\Phi}} \mathbf{S}_{t}]+\mathbf{I}
\end{aligned}
\end{equation}
\end{lemma}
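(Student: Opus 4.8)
The plan is to treat the two claims — convergence of the series defining $\mathbf{W}^{(k)}_{\mathcal{\Phi}}$ and the fixed-point identity it satisfies — in that order, so that the rearrangements needed for the identity are licensed by the convergence already in hand. For convergence, I would bound each summand in spectral norm. Since $\mathcal{\Phi}_{0:t-1}=\mathbf{S}_{t-1}\cdots\mathbf{S}_{0}$ and each $\Vert\mathbf{S}_{\tau}\Vert_2\le\rho$ almost surely, submultiplicativity gives the pathwise bound $\Vert\mathcal{\Phi}_{0:t-1}^{T}\mathcal{\Phi}_{0:t-1}\Vert_2=\Vert\mathcal{\Phi}_{0:t-1}\Vert_2^{2}\le\rho^{2t}$. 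Jensen's inequality (convexity of $\Vert\cdot\Vert_2$) then yields $\Vert\psi_k^{2t}\mathbb{E}[\mathcal{\Phi}_{0:t-1}^{T}\mathcal{\Phi}_{0:t-1}]\Vert_2\le|\psi_k\rho|^{2t}$, and since $|\psi_k\rho|<1$ the geometric majorant $\sum_{t}|\psi_k\rho|^{2t}$ converges. By the Weierstrass $M$-test the partial sums are Cauchy in the finite-dimensional (hence complete) matrix space, so $\mathbf{W}^{(k)}_{\mathcal{\Phi}}$ is well defined and the series converges absolutely — the property I will lean on later.

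For the identity, I would separate the $t=0$ term, which equals $\mathbf{I}$ because $\mathcal{\Phi}_{0:-1}=\mathbf{I}$, and derive a one-step recursion for $\mathbf{P}_t:=\mathbb{E}[\mathcal{\Phi}_{0:t-1}^{T}\mathcal{\Phi}_{0:t-1}]$. The decisive algebraic move is to peel off the \emph{innermost} factor $\mathbf{S}_0$: writing $\mathcal{\Phi}_{0:t-1}=\mathcal{\Phi}_{1:t-1}\mathbf{S}_0$ gives $\mathcal{\Phi}_{0:t-1}^{T}\mathcal{\Phi}_{0:t-1}=\mathbf{S}_0^{T}\big(\mathcal{\Phi}_{1:t-1}^{T}\mathcal{\Phi}_{1:t-1}\big)\mathbf{S}_0$, which pushes $\mathbf{S}_0$ to the outside of the quadratic form. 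Because the $\{\mathbf{S}_{\tau}\}$ are independent across time, conditioning on $\mathbf{S}_0$ via $\mathbb{E}[\mathbf{X}]=\mathbb{E}[\mathbb{E}[\mathbf{X}\mid\mathbf{Y}]]$ detaches the inner expectation; stationarity — the block $\mathbf{S}_1,\dots,\mathbf{S}_{t-1}$ has the same law as $\mathbf{S}_0,\dots,\mathbf{S}_{t-2}$ — then identifies $\mathbb{E}[\mathcal{\Phi}_{1:t-1}^{T}\mathcal{\Phi}_{1:t-1}]=\mathbf{P}_{t-1}$. This produces the clean recursion $\mathbf{P}_t=\mathbb{E}[\mathbf{S}^{T}\mathbf{P}_{t-1}\mathbf{S}]$ with $\mathbf{P}_0=\mathbf{I}$, where $\mathbf{S}$ denotes a generic copy of $\mathbf{S}_t$.

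Finally I would reassemble: substituting the recursion gives $\mathbf{W}^{(k)}_{\mathcal{\Phi}}=\mathbf{I}+\sum_{t\ge1}\psi_k^{2t}\mathbb{E}[\mathbf{S}^{T}\mathbf{P}_{t-1}\mathbf{S}]$, and after reindexing $s=t-1$ and pulling the bounded linear map $\mathbf{X}\mapsto\mathbb{E}[\mathbf{S}^{T}\mathbf{X}\mathbf{S}]$ (whose operator norm is at most $\rho^{2}$, hence continuous) through the sum I obtain $\mathbf{W}^{(k)}_{\mathcal{\Phi}}=\mathbf{I}+\psi_k^{2}\,\mathbb{E}[\mathbf{S}^{T}\mathbf{W}^{(k)}_{\mathcal{\Phi}}\mathbf{S}]$; the symmetry $\mathbf{S}=\mathbf{S}^{T}$ then gives the stated form. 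The step I expect to be the main obstacle is rigorously justifying this final interchange of expectation, infinite summation, and the factoring out of $\mathbf{W}^{(k)}_{\mathcal{\Phi}}$ — precisely where the absolute convergence of the first paragraph is essential, since it guarantees that $\sum_{s\ge0}\psi_k^{2s}\mathbf{P}_s$ converges to $\mathbf{W}^{(k)}_{\mathcal{\Phi}}$ and that the continuous operator $\mathbb{E}[\mathbf{S}^{T}(\cdot)\mathbf{S}]$ commutes with the limit. A secondary caveat to record is that the stationarity/independence identification of $\mathbf{P}_{t-1}$ presumes the $\{\mathbf{S}_{\tau}\}$ are i.i.d.\ across time, consistent with the edge sampling of Definition 1.
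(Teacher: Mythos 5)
Your proposal is correct and follows essentially the same route as the paper: convergence via Jensen's inequality, submultiplicativity of the spectral norm, and the geometric majorant $|\psi_k\rho|^{2t}$, followed by the fixed-point identity via conditioning on the earliest shift (the paper's one-line invocation of total expectation). You merely make explicit the one-step recursion $\mathbf{P}_t=\mathbb{E}[\mathbf{S}^{T}\mathbf{P}_{t-1}\mathbf{S}]$ and the interchange of the infinite sum with the continuous linear map $\mathbf{X}\mapsto\mathbb{E}[\mathbf{S}^{T}\mathbf{X}\mathbf{S}]$, details the paper leaves implicit.
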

\begin{proof}
    See Appendix I.
\end{proof}
\begin{theorem}
    Under \textbf{Proposition 4} and \textbf{Lemma 1}, given $\mathbf{D}$ varies as $\mathbf{D}^{(k)}=\textnormal{diag}\{\alpha_{1}^{(k)},\cdots, \alpha_{N}^{(k)} \}$, $\Theta = \{\mathbf{D}^{(1)}\mathbf{1},\cdots, \mathbf{D}^{(K)}\mathbf{1}\}\in\mathbb{R}^{N\times K}$, the closed-form solution $\alpha_{i}^{(k)}$ and noise reduction $I(\Theta)$ are 
    \begin{equation}
\begin{aligned}
\alpha_{i}^{(k)} & = \frac{\psi_{k}[\mathbf{W}_{\mathcal{\Phi}}^{(k)}\overline{\mathbf{S}}]_{ii}}{[\mathbf{W}_{\mathcal{\Phi}}^{(k)}]_{ii}} \\
I(\Theta) & ==\sum_{k=1}^{K}\frac{(\sigma^{(k)})^2}{N}\sum_{i=1}^{N}\frac{\psi_{k}^{2}[\mathbf{W}_{\mathcal{\Phi}}^{(k)}\overline{\mathbf{S}}]_{ii}^{2}}{[\mathbf{W}_{\mathcal{\Phi}}^{(k)}]_{ii}}
\end{aligned}
\end{equation}
\end{theorem}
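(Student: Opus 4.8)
The plan is to treat the asymptotic per-node power $\zeta^{(k)}_{t\to\infty}$ from Proposition 4 as an objective in the diagonal entries $\alpha_i^{(k)}$ and minimize it, exactly mirroring Theorem 1 but with the stochastic observability Gramian $\mathbf{W}_{\mathcal{\Phi}}^{(k)}$ playing the role that $\mathbf{G}_{\mathbf{n}_{k-1}}(\mathbf{S})$ played in the FIR case. First I would observe that the ``original noise gain'' term $\textnormal{tr}(\psi_k^2\,\mathbb{E}[\mathbf{W}_{\mathcal{\Phi}}^{(k)}\mathbf{S}_t^2])$ is independent of $\mathbf{D}$, so minimizing $\zeta^{(k)}_{t\to\infty}$ over $\mathbf{D}^{(k)}$ is equivalent to minimizing only the mitigation term $\textnormal{tr}(\mathbf{W}_{\mathcal{\Phi}}^{(k)}\mathbf{D}^2)-2\psi_k\,\textnormal{tr}(\mathbf{W}_{\mathcal{\Phi}}^{(k)}\overline{\mathbf{S}}\mathbf{D})$. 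Because the $K$ parallel branches contribute independently to the total power, the overall problem decouples into $K$ separate minimizations indexed by $k$, each scaled by $(\sigma^{(k)})^2/N$.

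Next I would exploit the diagonal structure $\mathbf{D}^{(k)}=\textnormal{diag}\{\alpha_1^{(k)},\dots,\alpha_N^{(k)}\}$ to rewrite the two $\mathbf{D}$-dependent traces as single sums over the diagonal, namely $\textnormal{tr}(\mathbf{W}_{\mathcal{\Phi}}^{(k)}\mathbf{D}^2)=\sum_i [\mathbf{W}_{\mathcal{\Phi}}^{(k)}]_{ii}(\alpha_i^{(k)})^2$ and $\textnormal{tr}(\mathbf{W}_{\mathcal{\Phi}}^{(k)}\overline{\mathbf{S}}\mathbf{D})=\sum_i [\mathbf{W}_{\mathcal{\Phi}}^{(k)}\overline{\mathbf{S}}]_{ii}\alpha_i^{(k)}$. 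The per-branch objective then separates into $N$ independent scalar quadratics $f_i(\alpha)=[\mathbf{W}_{\mathcal{\Phi}}^{(k)}]_{ii}\,\alpha^2-2\psi_k[\mathbf{W}_{\mathcal{\Phi}}^{(k)}\overline{\mathbf{S}}]_{ii}\,\alpha$, each of which may be optimized on its own. Setting $f_i'(\alpha)=0$ yields the candidate minimizer $\alpha_i^{(k)}=\psi_k[\mathbf{W}_{\mathcal{\Phi}}^{(k)}\overline{\mathbf{S}}]_{ii}/[\mathbf{W}_{\mathcal{\Phi}}^{(k)}]_{ii}$, which is the claimed closed form, and substituting it back (completing the square) gives the minimum value $f_i(\alpha_i^{(k)})=-\psi_k^2[\mathbf{W}_{\mathcal{\Phi}}^{(k)}\overline{\mathbf{S}}]_{ii}^2/[\mathbf{W}_{\mathcal{\Phi}}^{(k)}]_{ii}$. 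Reinstating $(\sigma^{(k)})^2/N$ and summing over $i$ and $k$ then reads off $I(\Theta)$ as the magnitude of the total reduction, matching the stated expression.

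The step I expect to require the most care — and the genuine distinction from the deterministic Theorem 1 — is justifying that $\mathbf{W}_{\mathcal{\Phi}}^{(k)}$ is a well-defined, deterministic, positive-definite matrix in the random-topology setting, since this is precisely what makes each $f_i$ a strictly convex quadratic with a unique finite minimizer. This is where Lemma 1 is essential: under $|\psi_k\rho|<1$ it guarantees that the series $\mathbf{W}_{\mathcal{\Phi}}^{(k)}=\sum_{t=0}^{\infty}\mathbb{E}[\psi_k^{2t}\mathcal{\Phi}_{0:t-1}^{T}\mathcal{\Phi}_{0:t-1}]$ converges and satisfies the fixed-point relation $\mathbf{W}_{\mathcal{\Phi}}^{(k)}=\psi_k^2\mathbb{E}[\mathbf{S}_t\mathbf{W}_{\mathcal{\Phi}}^{(k)}\mathbf{S}_t]+\mathbf{I}$. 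The matrix is deterministic because it is itself an expectation, and its $t=0$ term equals $\mathbf{I}$ while every subsequent summand is positive semidefinite, so $\mathbf{W}_{\mathcal{\Phi}}^{(k)}\succeq\mathbf{I}$ and hence $[\mathbf{W}_{\mathcal{\Phi}}^{(k)}]_{ii}\geq 1>0$ for every $i$. With strict positivity of the leading coefficients secured, each quadratic is bounded below and the first-order condition indeed delivers the global minimizer, completing the argument; the remaining manipulations are the same routine trace and diagonal-extraction identities used in Theorem 1.
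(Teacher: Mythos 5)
Your proof is correct and follows essentially the same route as the paper, which simply reduces Theorem 4 to the gradient-of-a-decoupled-quadratic argument of Theorem 1 with $\mathbf{W}_{\mathcal{\Phi}}^{(k)}$ in place of $\mathbf{G}_{\mathbf{n}_{k-1}}(\mathbf{S})$. Your additional observation that $\mathbf{W}_{\mathcal{\Phi}}^{(k)}\succeq\mathbf{I}$ (so each diagonal leading coefficient is strictly positive and the stationary point is a genuine minimum) is a small but worthwhile strengthening that the paper leaves implicit.
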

\begin{proof}
    The argument follows similarly to that of \textbf{Theorem 1}.
\end{proof}

\section{Numerical Experiments}
\label{sec:typestyle}

This section presents our simulation results, illustrated in Fig.~\ref{fig:res}. We consider a low-pass graph filtering task on the David Sensor Network \cite{18}, which has \(N = 64\) nodes and \(M = 236\) edges. FIR and ARMA graph filters are designed to approximate an ideal low-pass filter, following the benchmark in \cite{8}. The desired frequency response is defined as \(h(\lambda) = 1\) for \(\lambda < \lambda_c\) and \(0\) otherwise, where \(\mathbf{S} = \lambda_{\text{max}}^{-1} \mathbf{L}\) and \(\lambda_c = 0.5\). For the FIR filter, the filter order varies from 6 to 12 to construct \(h(\lambda)\). For the ARMA filter, we use ARMA\(_1\) defined as \(\mathbf{y} = (\mathbf{I} + 0.5 \mathbf{S})^{-1} \mathbf{x}\). In stochastic cases, the settings remain unchanged, but with random \(\mathbf{S}_{t}\) of edge sampling probability $p\in[0.1,1]$. Each case is evaluated over 10,000 realizations. A dither quantizer \cite{19,20} with a range of \([-1, 1]\) is used to maintain independence, and the input \(\mathbf{x}\) has a uniform spectrum, properly normalized to
prevent overflow.
The signal-to-noise ratio (SNR) for evaluating the performance of both quantization error feedback and non-feedback filtering processes is defined as:
\begin{equation}
\text{SNR} = \left( \frac{\Vert \mathbf{y}_{q} - \mathbf{y} \Vert^{2}_{2}}{\Vert \mathbf{y} \Vert^{2}_{2}} \right)^{-1}, \mathbf{y} = 
\begin{cases} 
\mathbf{y}, & \text{(unbiased)} \\ 
\mathbb{E}[\mathbf{y}], & \text{(biased)} 
\end{cases}
\end{equation}
where \(\mathbf{y}_q\) is quantization noise injected output, and \(\mathbf{y}\) is the original output without quantization. In this context, the presence or absence of bias indicates whether the SNR accounts for graph randomness, which introduces losses unrelated to quantization.

As shown in Fig.~\ref{fig:res}, our method consistently outperforms the nonfeedback quantization approach, with maximum improvements of 7.53 dB, 8.21 dB, 7.3 dB, and 1.81 dB (biased). Fig.~\ref{fig:res}(a) demonstrates that our method is applicable to various FIR graph filters with different orders, with the noise reduction being dependent on their filter responses. Fig.~\ref{fig:res}(b) shows that when ARMA$_{1}$ reaches a steady state, the feedback model consistently maintains an SNR gap compared to the non-feedback model. Fig.~\ref{fig:res}(c) illustrates that, considering only quantization noise, its energy is negatively correlated with link loss. This indicates that with sparser connections, the quantization error is less likely to propagate and accumulate during the fusion process. Finally, Fig.~\ref{fig:res}(d) shows that even when accounting for the bias introduced by link loss, the overall deviation of the proposed method remains lower than that of the non-feedback model, with its SNR approaching the upper bound, defined as the ratio of the power of the filtered mean $\mathbb{E}[\mathbf{y}]$, to its variance $Var(\mathbf{y})$.

\section{Conclusion}
\label{sec:majhead}

In conclusion, this work introduces an error feedback approach to reduce quantization noise in distributed graph filters. Theoretical analysis and closed-form error weights for both FIR and ARMA graph filters are derived on deterministic and random graph topologies. Numerical experiments confirm the method's effectiveness in noise reduction.

% Below is an example of how to insert images. Delete the ``\vspace'' line,
% uncomment the preceding line ``\centerline...'' and replace ``imageX.ps''
% with a suitable PostScript file name.
% ----------------------------------------------------------------------
\begin{figure}[tb]
\begin{minipage}[b]{.49\linewidth}
  \centering
  \centerline{\includegraphics[width=4.05cm]{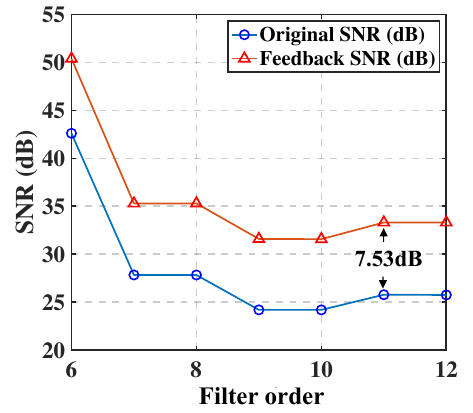}}
  \centerline{(a) Deterministic FIR}\medskip
\end{minipage}
\hfill
\begin{minipage}[b]{.49\linewidth}
  \centering
  \centerline{\includegraphics[width=4.07cm]{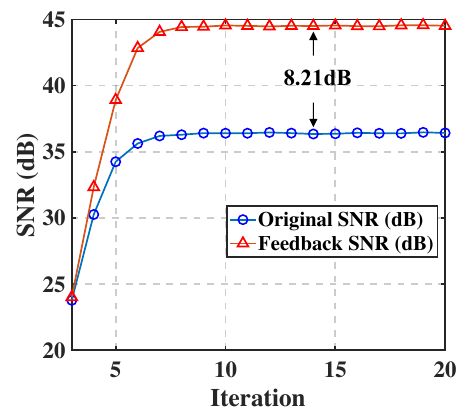}}
  \centerline{(b) Deterministic ARMA}\medskip
\end{minipage}
\hfill
\begin{minipage}[b]{0.49\linewidth}
  \centering
  \centerline{\includegraphics[width=4.05cm]{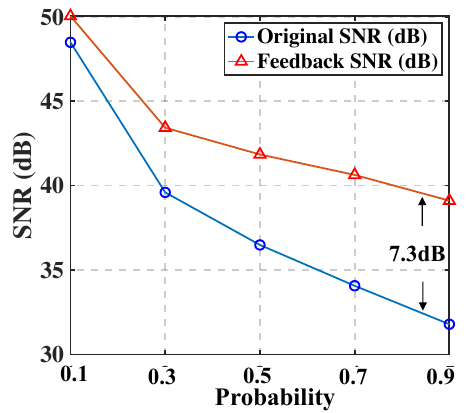}}
  \centerline{(c) Stochastic FIR}\medskip
\end{minipage}
\hfill
\begin{minipage}[b]{0.49\linewidth}
  \centering
  \centerline{\includegraphics[width=4.07cm]{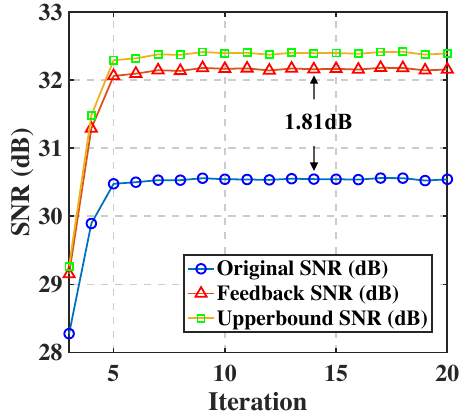}}
  \centerline{(d) Stochastic ARMA}\medskip
\end{minipage}
\vspace{-10pt} % Manually adjust space before the caption
\caption{Performance comparison.(a) The quantization bits for these filters are $b = \{14\times 1, 16\times 2, 20\times 2, 25\times 2\}
$. (b) ARMA$_{1}$ with $b ={6}$. (c) FIR of order $K=6$ with $b ={12}$. (d) ARMA$_{1}$ with $b ={4}$ under $p=0.95$.}
\label{fig:res}
\end{figure}

\section{APPENDIX I}

\label{sec:app}

\textit{Proof of Theorem 1:} Let $\mathbf{D}_{k-1}=\text{diag}\{\alpha_{1, k-1},\cdots, \alpha_{N, k-1} \}$ replace $\mathbf{D}$ in Eq. (\ref{deqn_ex9}), we have mitigation term $I_{k-1}=$
\begin{equation}
\begin{aligned}
    -2\sum_{i=1}^{N}\Big[\mathbf{G}_{\mathbf{n}_{k-1}}(\mathbf{S}) \mathbf{S}\Big]_{ii} \alpha_{i, k-1}+ \sum_{i=1}^{N}\Big[\mathbf{G}_{\mathbf{n}_{k-1}}(\mathbf{S})\Big]_{ii}\alpha_{i, k-1}^{2}  
\end{aligned}
\label{deqn_ex11}
\end{equation}
Let the gradient of $I_{k-1}$ be zero, i.e.,
\begin{equation}
\begin{aligned}
\frac{\partial I_{k-1}}{\partial \alpha_{i, k-1}} =\Big[\mathbf{G}_{\mathbf{n}_{k-1}}(\mathbf{S})\Big]_{ii}\alpha_{i, k-1}-\Big[\mathbf{G}_{\mathbf{n}_{k-1}}(\mathbf{S}) \mathbf{S}\Big]_{ii} = 0
\end{aligned}
\label{deqn_ex12}
\end{equation}
we have
\begin{equation}
\begin{aligned}
\alpha_{i,k-1} = \frac{\Big [\mathbf{G}_{\mathbf{n}_{k-1}}(\mathbf{S})\mathbf{S}\Big ]_{ii}}{\Big [ \mathbf{G}_{\mathbf{n}_{k-1}}(\mathbf{S})\Big ]_{ii}}, I_{k-1} =\sum_{i=1}^{N}{\frac{\Big [\mathbf{G}_{\mathbf{n}_{k-1}}(\mathbf{S})\mathbf{S}\Big ]^{2}_{ii}}{\Big [ \mathbf{G}_{\mathbf{n}_{k-1}}(\mathbf{S})\Big ]_{ii}}}
\end{aligned}
\label{deqn_ex13}
\end{equation}
The proof is completed.

\vspace{1em}
\noindent \textit{Proof of Lemma 1:} By analogy to Carl Neumann’s theorem \cite{21},we consider the convergence of matrix norms. Applying Jensen's inequality, any subterm of \(\mathbf{W}^{(k)}_{\mathcal{\Phi}}\) satisfies
\begin{equation}
\begin{aligned}
\left\Vert \mathbb{E} \left[\psi_{k}^{2t} \mathcal{\Phi}_{0:t-1}^{T} \mathcal{\Phi}_{0:t-1}\right] \right\Vert_{2} \leq \mathbb{E} \left[\left\Vert \psi_{k}^{2t} \mathcal{\Phi}_{0:t-1}^{T} \mathcal{\Phi}_{0:t-1} \right\Vert_{2}\right].
\end{aligned}
\end{equation}
From sub-multiplicativity property of the spectral norm, \(\|\mathbf{AB}\|_{2} \leq \|\mathbf{A}\|_{2} \|\mathbf{B}\|_{2}\), we have
\begin{equation}
\begin{aligned}
\mathbb{E} \left[\left\Vert \psi_{k}^{2t} \mathcal{\Phi}_{0:t-1}^{T} \mathcal{\Phi}_{0:t-1} \right\Vert_{2}\right] \leq \mathbb{E} \left[(\|\psi_{k} \mathbf{S}_{t}\|_{2})^{2t}\right].
\end{aligned}
\end{equation}
Since \(\|\psi_k \mathbf{S}_{t}\|_{2} \leq |\psi_k \rho| < 1\), it follows that
\begin{equation}
\begin{aligned}
\left\Vert \mathbb{E} \left[\psi_{k}^{2t} \mathcal{\Phi}_{0:t-1}^{T} \mathcal{\Phi}_{0:t-1}\right] \right\Vert_{2} \leq \mathbb{E} \left[(\|\psi_{k} \mathbf{S}_{t}\|_{2})^{2t}\right] \leq |\psi_k \rho|^{2t}.
\end{aligned}
\end{equation}
Since \(0 < |\psi_k \rho| < 1\), the term \(|\psi_k \rho|^{2t}\) converges, then $\mathbf{W}^{(k)}_{\mathcal{\Phi}}  = \sum_{t=0}^{\infty} \mathbb{E} \left[\psi_{k}^{2t} \mathcal{\Phi}_{0:t-1}^{T} \mathcal{\Phi}_{0:t-1}\right]$ converges. By total expectation \(\mathbb{E}[\mathbf{X}] = \mathbb{E}[\mathbb{E}[\mathbf{X}|\mathbf{Y}]]\), Eq.~(15) follows. The proof is completed.

\vfill\pagebreak

% References should be produced using the bibtex program from suitable
% BiBTeX files (here: strings, refs, manuals). The IEEEbib.bst bibliography
% style file from IEEE produces unsorted bibliography list.
% -------------------------------------------------------------------------
% \bibliographystyle{IEEEbib}
% \bibliography{strings,refs}

\end{document}